\newif\ifanonymous
%\anonymoustrue
\anonymousfalse

\documentclass[runningheads]{llncs}
\usepackage{graphicx} % Required for inserting images
\usepackage{xspace}

\usepackage{float,siunitx} 
\usepackage{amsmath,amssymb,amsfonts}
\usepackage{booktabs,caption,subcaption}
\usepackage{amsmath}
\usepackage{algorithm}
\usepackage{algorithmic}
\usepackage{graphics,graphicx,xcolor}
\usepackage[utf8]{inputenc}
\usepackage{authblk}
\usepackage{diagbox}
%\usepackage{biblatex}

%%%%%%%%% comments %%%%%%%%%%%%%

%Colors
\definecolor{honestblue}{RGB}{42,102,204}
\definecolor{adversaryred}{RGB}{204,32,32}
\definecolor{altblue}{RGB}{30,140,220}
\definecolor{bribeegreen}{RGB}{50,160,60}
\definecolor{evilorange}{RGB}{240,80,30}
\definecolor{evilgreen}{RGB}{51,184,64}

%% RANDAO related commands
\newcommand{\advblock}[2]{\textcolor{red}{\ensuremath{\mathbf{A}}_{#1}^{#2}}}
\newcommand{\honestblock}[2]{\textcolor{blue}{\ensuremath{\mathbf{H}}_{#1}^{#2}}}

%% Fork

\newcommand{\offerbribe}{{\ensuremath{\mathsf{offerBribe}}}}
\newcommand{\takebribe}{{\ensuremath{\mathsf{takeBribe}}}}
%% Block statuses

\newcommand{\cf}{\emph{cf.}\xspace}

\newcommand{\eg}{\textit{e.g.,} }
\newcommand{\ie}{\textit{i.e.,} }

\usepackage{listings}
\usepackage{listings-solidity} % copy listings-solidity.sty to your project

\usepackage{pdflscape}

\usepackage{xcolor}
\definecolor{verylightgray}{rgb}{.97,.97,.97}

\lstset{ % define general preferences
	language=Solidity,
	backgroundcolor=\color{verylightgray},
	extendedchars=true,
	basicstyle=\footnotesize\ttfamily,
	showstringspaces=false,
	showspaces=false,
	numbers=left,
	numberstyle=\footnotesize,
	numbersep=9pt,
	tabsize=2,
	breaklines=true,
	showtabs=false,
	captionpos=b
}

% Copyright 2017 Sergei Tikhomirov, MIT License
% https://github.com/s-tikhomirov/solidity-latex-highlighting/

\usepackage{listings, xcolor}

\definecolor{verylightgray}{rgb}{.97,.97,.97}

\lstdefinelanguage{Solidity}{
	keywords=[1]{anonymous, assembly, assert, balance, break, call, callcode, case, catch, class, constant, continue, constructor, contract, debugger, default, delegatecall, delete, do, else, emit, event, experimental, export, external, false, finally, for, function, gas, if, implements, import, in, indexed, instanceof, interface, internal, is, length, library, log0, log1, log2, log3, log4, memory, modifier, new, payable, pragma, private, protected, public, pure, push, require, return, returns, revert, selfdestruct, send, solidity, storage, struct, suicide, super, switch, then, this, throw, transfer, true, try, typeof, using, value, view, while, with, addmod, ecrecover, keccak256, mulmod, ripemd160, sha256, sha3}, % generic keywords including crypto operations
	keywordstyle=[1]\color{blue}\bfseries,
	keywords=[2]{address, bool, byte, bytes, bytes1, bytes2, bytes3, bytes4, bytes5, bytes6, bytes7, bytes8, bytes9, bytes10, bytes11, bytes12, bytes13, bytes14, bytes15, bytes16, bytes17, bytes18, bytes19, bytes20, bytes21, bytes22, bytes23, bytes24, bytes25, bytes26, bytes27, bytes28, bytes29, bytes30, bytes31, bytes32, enum, int, int8, int16, int24, int32, int40, int48, int56, int64, int72, int80, int88, int96, int104, int112, int120, int128, int136, int144, int152, int160, int168, int176, int184, int192, int200, int208, int216, int224, int232, int240, int248, int256, mapping, string, uint, uint8, uint16, uint24, uint32, uint40, uint48, uint56, uint64, uint72, uint80, uint88, uint96, uint104, uint112, uint120, uint128, uint136, uint144, uint152, uint160, uint168, uint176, uint184, uint192, uint200, uint208, uint216, uint224, uint232, uint240, uint248, uint256, var, void, ether, finney, szabo, wei, days, hours, minutes, seconds, weeks, years},	% types; money and time units
	keywordstyle=[2]\color{teal}\bfseries,
	keywords=[3]{block, blockhash, coinbase, difficulty, gaslimit, number, timestamp, msg, data, gas, sender, sig, value, now, tx, gasprice, origin},	% environment variables
	keywordstyle=[3]\color{violet}\bfseries,
	identifierstyle=\color{black},
	sensitive=true,
	comment=[l]{//},
	morecomment=[s]{/*}{*/},
	commentstyle=\color{gray}\ttfamily,
	stringstyle=\color{red}\ttfamily,
	morestring=[b]',
	morestring=[b]"
}

\lstset{
	language=Solidity,
	backgroundcolor=\color{verylightgray},
	extendedchars=true,
	basicstyle=\footnotesize\ttfamily,
	showstringspaces=false,
	showspaces=false,
	numbers=left,
	numberstyle=\footnotesize,
	numbersep=9pt,
	tabsize=2,
	breaklines=true,
	showtabs=false,
	captionpos=b
}
\usepackage{url}
\usepackage{pst-node}% http://ctan.org/pkg/pst-node
\usepackage{multido}% http://ctan.org/pkg/multido

\usepackage{varioref,hyperref}
\hypersetup{
  linkcolor  = red,
  citecolor  = green,
  urlcolor   = blue,
  colorlinks = true,
}
\usepackage[nameinlink,capitalize]{cleveref}

\usepackage{tikz}
\usetikzlibrary{positioning,chains,fit,shapes,calc}
\usepackage[most]{tcolorbox}
\usepackage{comment}
\usepackage[section]{placeins}

\usepackage[utf8]{inputenc}
\usepackage{pgfplots}
\DeclareUnicodeCharacter{2212}{−}
\usepgfplotslibrary{groupplots,dateplot}
\usetikzlibrary{patterns,shapes.arrows}
\pgfplotsset{compat=newest}

\usepackage{tikz}
\usetikzlibrary{arrows.meta,positioning,calc,fit}

\usepackage{pifont}
\newcommand{\cmark}{\checkmark} % checkmark
\newcommand{\xmark}{\ding{55}}  % x mark

\tikzset{
  block/.style = {draw, thick, rounded corners=2pt, minimum width=1.7cm, minimum height=.8cm, align=center},
  honest/.style = {block, fill=honestblue!85, text=white},
  adversary/.style = {block, draw=adversaryred, dashed, very thick, fill=adversaryred!15, text=adversaryred},
  honestarw/.style = {altblue, -{Latex[length=3mm]}, very thick},
  advarw/.style = {adversaryred, dashed, -{Latex[length=3mm]}, thick},
  altarw/.style = {bribeegreen, dash dot, -{Latex[length=3mm]}, thick},
achain/.style = {black, dashed, -{Latex[length=3mm]}, thick},
chain/.style = {black, -{Latex[length=3mm]}, thick},
slotline/.style = {gray!55, line width=0.6pt},
}

\usepackage{makecell}

%\newtheorem{theorem}{Theorem}
%\newtheorem{definition}{Definition}
%\newtheorem{lemma}{Lemma}

% Configure listings for Python syntax
\lstset{
    language=Python,
    basicstyle=\ttfamily\small,
    keywordstyle=\color{blue}\bfseries,
    stringstyle=\color{red},
    commentstyle=\color{green!60!black}\itshape,
    showstringspaces=false,
    frame=single, % Adds a frame around the code
    breaklines=true,
    numbers=left,
    numberstyle=\tiny,
    captionpos=b, % Caption below the figure
    escapeinside={(*@}{@*)}, % Escape to LaTeX for custom formatting
}

\usepackage{mathtools}

\newcommand{\pk}{\mathsf{pk}}
\newcommand{\sk}{\mathsf{sk}}

\newcommand{\pboost}{{\ensuremath{\mathit{p}_{\mathsf{boost}}}}}

\newcommand{\glabel}[1]{\{X^b\}}

\newcommand{\bls}{\mathsf{BLS}}
\newcommand{\signn}{\mathsf{Sign}}
\newcommand{\verifyy}{\mathsf{Verify}}
\newcommand{\keygenn}{\mathsf{KeyGen}}
\def\secparam{\ensuremath{1^\secpar}}
\def\secpar{\ensuremath{\lambda}}

\usepackage{xcolor}

\newif\ifrev
\revtrue   % or \revfalse

\newcommand{\revision}[1]{%
  \ifrev
    \textcolor{black}{#1}%
  \else
    #1%
  \fi
}
%Springer needs this

\title{\emph{Bribers, Bribers on The Chain,\\
Is Resisting All in Vain?}\\ \large Trustless Consensus Manipulation \\ Through Bribing Contracts}
\ifanonymous
    \author{Bribers, Bribers\inst{}}
    \institute{On The Chain} 
\else
\author{Bence Soóki-Tóth\inst{1,2}\footnote{Bence Soóki-Tóth and István András Seres contributed equally to this work.}, István András Seres$^{\star\star}$\inst{1}, Kamilla Kara\inst{1}, Ábel Nagy\inst{1}, Balázs Pejó\inst{3,4}, Gergely Biczók\inst{3,4}}
\institute{Eötvös Loránd University\and Aarhus University\and CrySyS Lab, Dep. of Networked Systems and Services, Budapest University of Technology and Economics \and HUN-REN-BME Information Systems Research Group}
\authorrunning{Soóki-Tóth et al.}
\fi
\titlerunning{Bribers, Bribers on The Chain, Is Resisting All in Vain?}

\hypersetup{
    colorlinks=true,
    linkcolor=blue,
    filecolor=magenta,      
    urlcolor=blue,
    citecolor=blue,
    pdftitle={Bribers, Bribers on The Chain, Is Resisting All in Vain? Trustless Consensus Manipulation Through Bribing Contracts} 
}

\begin{document}

\maketitle

\begin{abstract}
The long-term success of cryptocurrencies largely depends on the incentive compatibility provided to the validators. Bribery attacks, facilitated trustlessly via smart contracts, threaten this foundation. This work introduces, implements, and evaluates three novel and efficient bribery contracts targeting Ethereum validators. The first bribery contract enables a briber to fork the blockchain by buying votes on their proposed blocks. The second contract incentivizes validators to voluntarily exit the consensus protocol, thus increasing the adversary's relative staking power. The third contract builds a trustless bribery market that enables the briber to auction off their manipulative power over the RANDAO, Ethereum's distributed randomness beacon. Finally, we provide an initial game-theoretical analysis of one of the described bribery markets.

\keywords{Proof-of-stake \and Consensus Manipulation \and Bribery Attacks \and Smart Contracts \and Ethereum \and Algorithmic Incentive Manipulation}
\end{abstract}

\section{Introduction}

The incentive compatibility of protocol participants is a central theme in cryptocurrency security research, as first analyzed by Nakamoto in the celebrated Bitcoin white paper~\cite[Section 6.]{nakamoto2008bitcoin}. The seminal work of Eyal and Sirer demonstrated the first incentive incompatibilities in the Bitcoin protocol, showing that an economically rational miner would not follow the protocol but rather employ a selfish mining strategy to increase its block rewards~\cite{eyal2018majority}. Subsequently, an extensive research direction has explored the incentives of protocol participants in various settings, such as Bitcoin without block rewards~\cite{carlsten2016instability}. This paper studies the incentives of a cryptocurrency consensus protocol where trustless, cheap, and efficient Turing-complete bribery contracts are available. In a bribery attack, a motivated briber offers a payment to \emph{economically rational validators} to deviate from the protocol, to \eg double spend~\cite{judmayer2021sok,teutsch2016cryptocurrencies}, fork the blockchain~\cite{DBLP:conf/aft/AvarikiotiKLM24,judmayer2022much}, or censor transactions~\cite{DBLP:conf/sigecom/BergerFMS25,mccorry2018smart,nadahalli2021timelocked}. Such protocol deviations can often be verified efficiently using cryptographic primitives (\eg signatures or zero-knowledge proofs), facilitating the atomic exchange of the bribe for the requested deviant behavior.

Bribery attacks have a decade-long history, initiated by Bonneau at FC'16~\cite{bonneau2016buy}. Much of this prior work has focused on bribing validators cross-chain, which considerably limits their practicality~\cite{judmayer2022much,judmayer2019pay,judmayer2021sok}. These works often assume a funding blockchain that hosts the bribery contracts and another (typically Proof-of-Work (PoW)) blockchain whose validators are being bribed, introducing limitations related to synchrony, liquidity, and interoperability. To our knowledge, bribery attacks leveraging Turing-complete scripting capabilities native to a Proof-of-Stake (PoS) consensus algorithm have not been rigorously analyzed. This setting is particularly risky because verifying properties of a PoS consensus algorithm (\eg the current head of the chain~\cite{kiayias2020non} or the active validator set) is often more efficient than in a PoW context, especially with access to expressive smart contract functionality. Note that in our closest related work, the commitment attacks devised in~\cite{DBLP:conf/eurosp/SarencheTMSP25} are not trustless, as the bribee must trust the briber.

This research gap is critical due to a common misconception about crypto-economic security. It is widely but wrongly believed that permissionless consensus protocols offer a security budget equal to the total value of staked assets ($\approx 115.3$ billion USD in Ethereum). This is only true for long-term takeovers. Contrary to this belief, in more realistic models with economically rational validators, the true security budget is significantly lower. We emphasize this by demonstrating the surprisingly low cost of the attacks we introduce. In particular, we show that an adversary can fork the Ethereum blockchain by offering bribes of less than 0.09 Ether ($\approx 334$ USD) to rational validators,~\cf~\Cref{sec:pay_to_fork_incentives}.

The attacks we present undermine the fundamental security guarantees of a cryptocurrency consensus protocol. Although cryptocurrencies pursue many security goals, their integrity rests on three core properties: safety, liveness, and fairness. Safety ensures that every honest validator agrees on the same ledger state. Liveness guarantees that all valid transactions are eventually included in the ledger. Finally, fairness dictates that a validator with $\alpha$ percent of the voting power should, in expectation, receive $\alpha$ percent of the protocol rewards.

\paragraph{Our Contributions.}
This work makes the following contributions.

\begin{description}
    \item[\textbf{Novel Bribery Contracts for PoS Ethereum.}] We design, implement, and evaluate three efficient, smart contract-based bribery attacks, each targeting a fundamental consensus property:
    \begin{itemize}
        \item \textbf{$\mathsf{PayToAttest}$:} Violates \textbf{safety} by enabling a briber to fork the blockchain by buying votes (often validator attestations) for a competing chain.
        \item \textbf{$\mathsf{PayToExit}$:} Threatens \textbf{liveness} by incentivizing validators to exit the active validator set, thereby increasing the adversary's relative stake.
        \item \textbf{$\mathsf{PayToBias}$:} Impacts \textbf{fairness} by creating a market for biasing the RANDAO, Ethereum's randomness beacon. To our knowledge, this is the first public implementation of such a bribery market.
    \end{itemize}
    We have released our implementation as an open-source artifact to foster further research.~\footnote{
    \ifanonymous
    ~\url{https://anonymous.4open.science/r/bribery-zoo-70FD/}.
    \else
    ~\url{https://github.com/0xSooki/bribery-zoo}.
    \fi}

    \item[\textbf{Game-Theoretic Analysis of Practicality.}] We model the $\mathsf{PayToExit}$ attack as a Stackelberg game to characterize the adversary's optimal bribe. Our analysis demonstrates the attack's practicality, deriving an optimal per-validator bribe of $9.23$~ETH ($\approx 34,225$ USD), an accessible amount for a motivated attacker.
\end{description}

\paragraph{Outline.} The remainder of this paper is organized as follows. 
We review the pertinent background in~\Cref{sec:preliminaries}. 
We describe our system and threat models in~\Cref{sec:system_model}. 
We introduce our three bribery contracts in~\Cref{sec:bribing_contracts} and analyze the resulting incentives in~\Cref{sec:incentives}. 
\revision{We discuss our related work in~\Cref{sec:related_work}.}
We conclude our paper with potential extensions and open questions in~\Cref{sec:conclusion}.
\section{Preliminaries}\label{sec:preliminaries}

\paragraph{Notations. }
We model the consensus environment at a given time $t$ as having $N(t)$ total active validators who collectively stake $S(t)$ amount of ETH. An adversary, or briber, controls $\alpha$ fraction of this total stake ($0 \le \alpha \le 0.5$). The remaining $(1-\alpha)$ stake belongs to non-adversarial validators, which we partition based on their behavior: a fraction $\beta$ ($0 \le \beta \le 1$) are \textit{economically rational} and will accept a profitable bribe, while the remaining fraction $(1-\alpha)(1-\beta)$ are \textit{honest} and will always follow the protocol. The virtual weight for timely block proposals is denoted by $\pboost{}$, which is set to $\pboost{}=0.4$, at the time of writing.

%We summarize our used notations in~\Cref{tab:notation}.
%\begin{table}[h!]
%\centering
%\begin{tabular}{c|l}
%\textbf{Symbol} & \textbf{Description}  \\
%\hline
%$N(t)$     & Number of active validators at time $t$  \\
%$\rewardPerEpoch{}$  & Honest reward per validator  (ETH / time unit (\eg per epoch)) \\
%$k$     & Reward constant in $\rewardPerEpoch$  estimated empirically, e.g. $2940.21$ \\
%$M(N)$     & Bribe offered by the attacker (ETH) \\
%$\pboost{}$     & Virtual weight for timely block proposals ($\pboost{}=0.4$)  \\
%$\alpha$ & The adversary's staking power ($0\leq\alpha\leq1$) \\
%$\beta$   & Economically rational validators' ratio in honest validators ($0\leq\beta\leq1$) \\
%\end{tabular}
%\caption{Notation and variable definitions used in bribery and slashing models.}
%\label{tab:notation}
%\end{table}

\subsection{The BLS signature scheme}\label{sec:background_bls_sigs}

The BLS signature scheme, proposed by Boneh, Lynn, and Shacham~\cite{boneh2001short}, is used in the Ethereum consensus protocol for validator attestations. The scheme utilizes a non-degenerate, efficiently computable, bilinear pairing function $e:\mathbb{G}_1\times\mathbb{G}_2\rightarrow\mathbb{G}_T$ over prime-order cyclic groups such that $\vert\mathbb{G}_1\vert=\vert\mathbb{G}_2\vert=\vert\mathbb{G}_T\vert=p$. It consists of the following algorithms:

\begin{description} 
     \item[$\bls.\keygenn(\secparam)$.] Samples uniformly at random a secret signing key $\sk \stackrel{\$}\leftarrow \mathbb{F}_p$ and sets the public verification key as $\pk = g^{\sk} \in \mathbb{G}_1$.
    
     \item[$\bls.\signn(\sk, m)$.] Computes the signature $\sigma := H(m)^{\sk} \in \mathbb{G}_2$, where $H: \{0, 1\}^* \rightarrow \mathbb{G}_2$ is a hash-to-curve function, and modeled as a random oracle. 
    
     \item[$\bls.\verifyy(\pk, m, \sigma)$.] Outputs $1$ if $e(\sigma, g_2) = e(\pk,H(m))$ holds, and $0$ otherwise. 
\end{description}

The BLS signature scheme is unique and provides strong unforgeability under the co-CDH assumption~\cite{boneh2001short}. The crucial property we exploit in this work is the scheme's support for signature aggregation~\cite{boneh2018bls}.
%\footnote{In Ethereum, rogue key attacks are not an issue as the possession of the secret keys $\sk$ for a validator's $\pk$ are ensured at both the consensus and application levels.} 
More precisely, if the \emph{same message} $m$ is signed under multiple public keys $\{\pk_i\}^{n}_{i=1}$, the corresponding individual signatures $\{\sigma_i\}^{n}_{i=1}$ can be combined. The resulting aggregate signature $\sigma=\sigma_1\cdot\dots\sigma_n$ and aggregate public key $\pk=\pk_1\cdot\dots\pk_n$ can be validated with a single pairing check, reducing the verification complexity to $\mathcal{O}(1)$:

\begin{equation}\label{eq:bls_same_message_check}
    e\Big(\prod^n_{i=1}\sigma_i,g_2\Big)\stackrel{?}{=}e\Big(\prod^n_{i=1}\pk_i,H(m)\Big)\enspace.
\end{equation}

The $\mathcal{O}(1)$ verification efficiency does not hold if validators sign \emph{different messages} $\{m_i\}^{n}_{i=1}$. In this scenario, the verification equation requires $n+1$ pairings and $n$ target group operations. Specifically:

\begin{equation}\label{eq:bls_different_message_check}
    e\Big(\prod^n_{i=1}\sigma_i,g_2\Big)\stackrel{?}{=}\prod^{n}_{i=1}e\Big(\pk_i,H(m_i)\Big)\enspace.
\end{equation}

\Cref{fig:bls_batch_verification_measurements} compares the on-chain gas cost of BLS batch verification for $n$ signatures on a single message $m$ versus on distinct messages $\{m_i\}_{i\in[n]}$.

\subsection{Ethereum Proof-of-Stake essentials }\label{sec:pos_ethereum_prelims}

\begin{figure}[t!]
    \centering
    \begin{minipage}{0.45\textwidth}
        \centering
        \begin{lstlisting}
    class AttestationData(Container):
    slot: Slot
    (*@\color{red}index: CommitteeIndex@*)
    # LMD GHOST vote
    beacon_block_root: Root
    # FFG vote
    source: Checkpoint
    target: Checkpoint
        \end{lstlisting}
        \caption*{\texttt{AttestationData} class}
    \end{minipage}
    \hfill
    \begin{minipage}{0.45\textwidth}
        \centering
        \begin{lstlisting}
    class VoluntaryExit(Container):
    # Epoch when exit can be processed
    epoch: Epoch 
    validator_index: ValidatorIndex
        \end{lstlisting}
        \caption*{\texttt{VoluntaryExit} class}
    \end{minipage}
    \caption{Structures of the \texttt{AttestationData} and \texttt{VoluntaryExit} classes whose instances must be signed by validators whenever they attest to a proposed block or they want to cease to be a validator. Note that the \texttt{index} field was removed from the \texttt{AttestationData} for efficiency reasons as part of the Pectra hard fork.}
    \label{fig:class_structures}
\end{figure}

The Ethereum Proof-of-Stake protocol~\cite{pavloff2023ethereum} relies on validators to secure the network by proposing and attesting to blocks. Any user can become a validator by depositing $32$~ETH into the staking contract, and can similarly exit by withdrawing their stake. Validators are compensated with protocol rewards for correctly performing their duties, but their stake can be slashed for malicious behavior such as proposing or attesting to conflicting blocks. Crucially, the bribery attacks we introduce in this work \emph{do not entail} any slashable offenses.

In Ethereum PoS, time is structured into epochs, with each epoch comprising $32$ slots of $12$ seconds each. Within the first four seconds of a slot, an assigned validator may propose a new block. During the remaining eight seconds, a committee of $\frac{N(t)}{32}$ validators is selected to attest to what they perceive as the head of the chain by BLS-signing an attestation message,~\cf~\Cref{fig:class_structures}.

If a fork occurs, the fork-choice rule dictates that the branch in the block tree with the greatest weight of attestations is considered the canonical chain. Attestation processing and rewards are described in further detail in~\Cref{sec:attestation_rewards}. Importantly for our work, validator duties (\ie block proposal and attestation roles) are publicly known at least $6.4$ minutes in advance. For a comprehensive description of the Ethereum PoS protocol, we refer the reader to~\cite{pavloff2023ethereum}.

\paragraph{Recent Protocol Upgrades}
Several recent Ethereum Improvement Proposals (EIPs) are essential for implementing the efficient bribery contracts we present. EIP-4788~\cite{stokes2022eipbeaconroot}, included in the Dencun upgrade (March 2024), exposes the beacon block root to the Ethereum Virtual Machine (EVM), greatly simplifying the on-chain verification of consensus-related statements for smart contracts.

Additionally, three EIPs from the Pectra upgrade (May 2025) provide crucial functionalities. EIP-2537~\cite{vlasov2020eip} introduces gas-efficient precompiles for pairing operations on the BLS12-381 curve. EIP-7002~\cite{ryan2023eip} allows validator exits to be initiated from the execution layer, which enables a smart contract to verify more efficiently that a validator has voluntarily exited the active set. %\bence{Ez az EIP csak azt teszi lehetővé, hogy EL-ről lehessen csinálni exitet, a miénk CL-en is működik}.
Finally, EIP-7549~\cite{dapplion2023eip} moves the index field out of the \textsf{AttestationData} structure,~\cf~\Cref{fig:class_structures}. This change ensures that when validators attest to the same block, they all $\mathsf{BLS}$-sign the exact \emph{same message}. As a result, thousands of their signatures can be aggregated off-chain and verified on-chain with high efficiency,~\cf~\Cref{eq:bls_same_message_check}.
\section{System Model}\label{sec:system_model}

Our system model, depicted in \Cref{fig:system_model}, consists of four main entities. The consensus participants within this model are categorized according to the BAR (Byzantine, Altruistic, Rational) fault model~\cite{aiyer2005bar}.

\begin{figure}[t!]
    \centering
    \includegraphics[width=12cm]{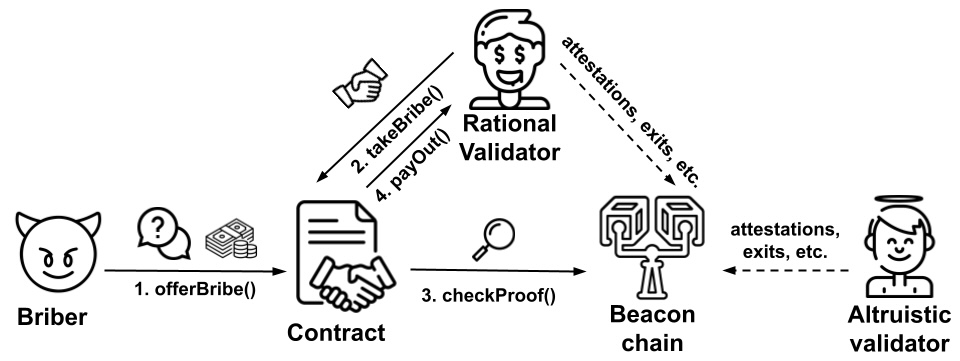}
    \caption{Our simple bribery system model. Typically, Steps $2$ and $3$ are executed atomically in the same transaction. If not, then the transaction in Step $3$ could be initiated by anyone; thus, it originates from the contract in the figure.}
    \label{fig:system_model}
\end{figure}

\begin{description}    
    \item[Briber] The briber is a potentially extrinsically motivated entity controlling an $\alpha$ fraction of the total stake. Their objective is to manipulate consensus by offering rewards to other validators for deviating from the protocol.
    \item[Rational] Rational validators, controlling a $(1-\alpha)\beta$ fraction of the stake, are profit-maximizing actors. We assume they will deviate from the protocol if an offered bribe is greater than the potential rewards from honest participation.
    \item[Altruistic] Altruistic validators, who control a $(1-\alpha)(1-\beta)$ fraction of the stake, are honest participants. They adhere strictly to the protocol at all times, regardless of any external incentives or bribes.
    \item[Contract] To facilitate an atomic, fair exchange, which is impossible without a trusted third party~\cite{pagnia1999impossibility}, we model a smart contract that acts as this trusted intermediary. This on-chain entity has a transparent state, holds the briber's funds in escrow, and automatically releases payment to any validator who provably fulfills the briber's conditions (\eg provides a specific (BLS) signature, Merkle authentication path, or other zero-knowledge proofs).
\end{description}

For our analysis, we assume the staking power of each entity remains constant, with the notable exception of the $\mathsf{PayToExit}$ model, which analyzes dynamic changes in stake distribution. We also assume a synchronous communication model where the maximum network delay is bounded at four seconds, corresponding to one-third of the duration of an Ethereum slot.

\subsection{Bribing contract interface}\label{sec:bribing_interface}

We foresee a future where consensus manipulation could become normalized and institutionalized, much like Maximal Extractable Value (MEV) is today~\cite{daian2020flash}. Such a development would necessitate standard, open interfaces to allow bribers and bribees to interact seamlessly. Thus, we propose a baseline interface, $\mathsf{IBribe}$.

We expect that rational agents will constantly monitor contracts implementing this interface by calling the view-only function $\mathsf{bribeAmnt()}$, to automatically accept any bribe exceeding their profitability threshold. This behavior mirrors the automated strategies already seen in on-chain MEV extraction~\cite{DBLP:conf/aft/SolmazHVW25}.

The $\mathsf{IBribe}$ interface,~\cf~\Cref{fig:bribe_interface}, defines two categories of functions. The core bribery interaction is handled by $\mathsf{offerBribe}()$ and $\mathsf{takeBribe}()$, which facilitate the bribe itself (\cf~\Cref{fig:system_model}). Additionally, it includes three convenience functions for fund management: $\mathsf{depositFunds}()$, $\mathsf{updateBribeAmnt}()$, and $\mathsf{withdrawFunds}()$.

\begin{figure}
\begin{minipage}{.95\linewidth}
\begin{lstlisting}[language=Solidity][float]
interface IBribe {
  function bribeAmnt() public view returns (uint256);
  function depositFunds() external payable;
  function updateBribeAmnt(uint256 amnt) external;
  function withdrawFunds(uint256 amnt) external;
  function offerBribe(bytes calldata pubkey, uint256 amnt, 
    bytes memory data) external;
  function takeBribe(bytes calldata signature) external; 
}
\end{lstlisting}
\end{minipage}
\caption{The $\mathsf{IBribe}$ interface for our proposed bribery contracts.}\label{fig:bribe_interface}
\end{figure}

\begin{description}
    \item[Offer bribe] The briber initiates the process by calling the $\mathsf{offerBribe}()$ function. In this transaction, they deposit funds into the bribery contract and specify the required protocol deviation -- typically a structured message that the bribee must sign (\cf~\Cref{fig:class_structures}).
    \item[Take bribe] To accept the offer, the bribee calls the $\mathsf{takeBribe}()$ function. This call must include a proof of the required deviation, which is usually a cryptographic signature (\eg BLS or ECDSA) over a specific protocol message, such as an attestation or a voluntary exit transaction.
    \item[Check proof and pay out] Upon receiving the $\mathsf{takeBribe}()$ call, the contract first verifies the submitted proof (via the internal $\mathsf{checkProof()}$ function). If the proof is valid, the contract atomically pays the bribe to the bribee (via $\mathsf{payOut}()$ step). While these actions are often executed within the single $\mathsf{takeBribe}()$ transaction, they can be separated. For instance, in recurring bribes designed to maintain a specific behavior (\cf~\Cref{sec:pay_to_exit}), the proof submission and payout may be handled in distinct transactions.
\end{description}
\section{Bribing Contracts: \emph{``Bribers, Bribers On The Chain!''}}\label{sec:bribing_contracts}

\subsection{The $\mathsf{PayToAttest}$ bribery contract}\label{sec:pay_to_fork}

The $\mathsf{PayToAttest}$ contract enables a briber to purchase BLS signatures from a set of validators, $\{\pk_{i}\}^{n}_{i=1}$, for a specific block header $m$. To initiate the bribe, the briber commits the aggregate public key, $\pk=\prod\limits^{n}_{i=1} \pk_i$, and the block hash, $H(m)$, to the smart contract. Participating validators can then claim their reward by providing a valid aggregate BLS signature $\sigma$ that satisfies the verification equation in \Cref{eq:bls_same_message_check}. To ensure atomicity between the execution and beacon chains, EIP-4788 (\cf~\Cref{sec:pos_ethereum_prelims}) can be used to verify that this aggregate signature $\sigma$ was included in a beacon block.  %\bence{Megemlítsük, hogy ez arra is felhasználható, hogy incentivizálni lehessen a bribert, hogy ne cenzurázzon?}\istvan{absolutely!}

This mechanism is a powerful tool for facilitating forking attacks. For example, an attacker can orchestrate an ex-ante reorg~\cite{schwarz2022three} to manipulate the RANDAO beacon~\cite{nagy2025forking} or to steal MEV~\cite{DBLP:conf/eurosp/SarencheTMSP25}, as we detail in \Cref{sec:general_exante_reorgs}. While ex-ante reorgs are a known risk, it is widely believed that non-majority attackers cannot reliably launch ex-post reorgs~\cite{schwarz2022three}. However, we now describe how the $\mathsf{PayToAttest}$ contract provides a novel mechanism for a \emph{non-majority} briber to successfully perform such an attack.

\begin{figure}[t!]
\begin{center}
\begin{tikzpicture}[font=\small]

% Horizontal y level for main chain
\def\y{0}
\def\x{0}

% Slot dividing vertical lines (four slots)
\foreach \i in {0,1,2} {
  \draw[slotline] (2.5*\i,\y-1.8) -- (2.5*\i,\y+2.5);
}

% Attestors
\node[inner sep=0pt] (b1) at (.5,2.5)
    {\includegraphics[width=0.06\textwidth]{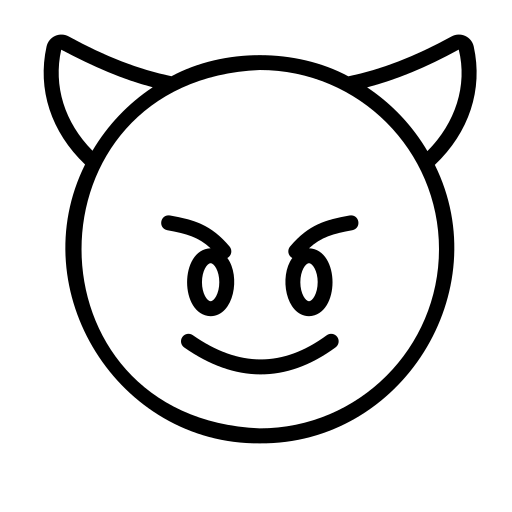}};
\node[inner sep=0pt] (a1) at (1.25,2.5)
    {\includegraphics[width=0.06\textwidth]{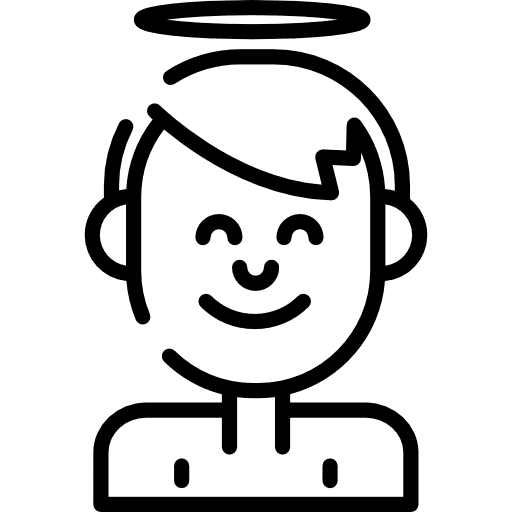}};
\node[inner sep=0pt] (r1) at (2.0,2.5)
    {\includegraphics[width=0.06\textwidth]{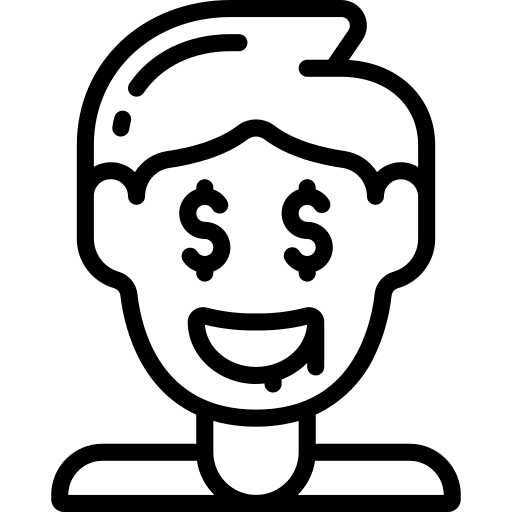}};
    
\node[inner sep=0pt] (b2) at (3,2.5)
    {\includegraphics[width=0.06\textwidth]{Figures/b.png}};
\node[inner sep=0pt] (a2) at (3.75,2.5)
    {\includegraphics[width=0.06\textwidth]{Figures/a.png}};
\node[inner sep=0pt] (r2) at (4.5,2.5)
    {\includegraphics[width=0.06\textwidth]{Figures/r.png}};
    
\node[inner sep=0pt] (b3) at (5.5,2.5)
    {\includegraphics[width=0.06\textwidth]{Figures/b.png}};
\node[inner sep=0pt] (a3) at (6.25,2.5)
    {\includegraphics[width=0.06\textwidth]{Figures/a.png}};
\node[inner sep=0pt] (r3) at (7,2.5)
    {\includegraphics[width=0.06\textwidth]{Figures/r.png}};

%Blocks
\node[honest]    (k)      at (-1.25,1)            {$k$};
\node[honest] (k1)    at (1.25,1)       {$k{+}1$};
\node[adversary]    (k1a)     at (3.75,-0.2)            {$(k{+}1)^*$};
\node[adversary] (k2a)    at (6.25,-0.2)       {$(k{+}2)^*$};

% Attestation arrows
\draw[honestarw] (a1.south) to (k1);
\draw[advarw] (b1.south) to (k);
\draw[altarw] (r1.south) to (k);

\draw[honestarw] (a2.south) to (k1);
\draw[advarw] (b2.south) to (k1a);
\draw[altarw] (r2.south) to (k1a);

\draw[honestarw] (a3.south) to (k2a);
\draw[advarw] (b3.south) to (k2a);
\draw[altarw] (r3.south) to (k2a);

%Chain arrows
\draw[black, thick] (-2.5,1) -- (k);
\draw[achain] (k1a) -- (-1.25,-0.2) -- (k);
\draw[achain] (k2a) -- (k1a);
\draw[chain] (k1) -- (k);
\draw[black, thick] (k1.east) -- (3.5,1);
\draw[black, thick] (k1.east) -- (3.75,1);
\draw[black, thick, dashed] (3.75,1) -- (7.2,1);

% Slot labels under diagram
\node[align=center] at (-1.25,\y-1.3)  {Slot $n$ \\ \textcolor{honestblue}{\textbf{H}onest}};
\node[align=center] at (1.25,\y-1.3)  {Slot $n{+}1$ \\ \textcolor{honestblue}{\textbf{H}onest}};
\node[align=center] at (3.75,\y-1.3)  {Slot $n{+}2$ \\ \textcolor{adversaryred}{\textbf{A}dversary}};
\node[align=center] at (6.25,\y-1.3)  {Slot $n{+}3$ \\ \textcolor{adversaryred}{\textbf{A}dversary}};

\node[align=center] at (8.5,1) {$2(1-\alpha)\cdot$\\$\cdot(1-\beta)$};
\node[align=center] at (8.5,\y-0.2) {$0.4 + \alpha {+} $\\$ +(1-\alpha)\beta$};
\node[align=center] at (8.5,\y-1.3) {Vote\\ weights};

\end{tikzpicture}
\end{center}
\caption{Leveraging the $\mathsf{PayToAttest}$ bribery contract to ex-post reorg the block proposed in Slot ($n+1$). Colored arrows indicate which blocks different validators vote for as the head of the blockchain. Black arrows represent hash pointers. Red (blue) blocks are proposed by the briber (honest validators).}
\label{fig:ex_post_fork_explainer}
\end{figure}

%OLD
%The ex-post reorg proceeds as follows. For concreteness, let us assume that two honest block proposers are followed by two slots in which the briber can propose blocks,~\cf~\Cref{fig:ex_post_fork_explainer}.\footnote{We study ex-post reorgs for general chain strings in~\Cref{sec:exante_paytofork}.} The adversary wants to fork out Block $k+1$ in Slot $n+1$ (perhaps due to its high MEV content). The adversary bribes the economically rational attestators to vote for Block $k$ even though Block $k+1$ has been published. Honest validators vote for Block $k+1$ as the head of the blockchain both in Slots $n+1$ and $n+2$. The adversary bribes again rational attestators to vote for its Block $(k+1)^*$. Whenever the adversary publishes its Block $(k+2)^*$, the adversary's fork accumulated $\pboost{} + \alpha + (1-\alpha)\beta$ attestations. The honest validators' fork has $2(1-\alpha)(1-\beta)$ vote.  In summary, if ~\cf~\Cref{eq:expost_winning_condition} holds, then the adversary's fork is considered the canonical chain. Finally, all validators vote for Block $(k+2)^*$ as the head of the blockchain in Slot $n+3$ completing the attack. A complete sequence diagram of the ex-post reorg involving the briber, bribee, honest validators, and the $\mathsf{PayToAttest}$ contract is shown in~\Cref{fig:pay_to_attest_ex_post_timeline}. The bribery costs for ex-post forking attacks are studied in~\Cref{sec:pay_to_fork_incentives}.

The ex-post reorg proceeds as follows. For concreteness, we assume a scenario where an honest block is published in Slot $n+1$, followed by two slots where the briber is the proposer, as depicted in~\Cref{fig:ex_post_fork_explainer}. The adversary’s goal is to fork out Block $k+1$, perhaps for its high MEV content. After Block $k+1$ is published, the adversary bribes rational attestators to instead vote for the preceding Block $k$. In the next slot, the adversary proposes their own block, Block $(k+1)^*$, which builds on Block $k$, and again bribes attestators to vote for it, while honest validators continue to vote for Block $k+1$. When the adversary publishes their second block, Block $(k+2)^*$, their fork has accumulated a weight of $\pboost{} + \alpha + (1-\alpha)\beta$ attestations, compared to the honest fork's $2(1-\alpha)(1-\beta)$ votes. If the following condition holds,
\begin{equation}\label{eq:expost_winning_condition}
    2(1-\alpha)(1-\beta)\leq \pboost{} + \alpha + (1-\alpha)\beta\enspace , 
\end{equation}
then the adversary's fork becomes canonical, and all validators then vote for Block $(k+2)^*$ in Slot $n+3$, completing the attack. A complete sequence diagram is shown in~\Cref{fig:pay_to_attest_ex_post_timeline}, and the associated bribery costs are studied in~\Cref{sec:pay_to_fork_incentives}.

\subsection{The $\mathsf{PayToExit}$ bribery contract}\label{sec:pay_to_exit}

The $\mathsf{PayToExit}$ contract creates a market where a briber can incentivize validators to voluntarily exit the Ethereum consensus protocol. A briber's motivation could be multifold: they can harm the network's liveness by reducing the number of active participants, increase their own relative staking power, or attract scarce capital from the deposit contract to an on-chain application (\eg a decentralized exchange or lending platform). 

To claim the reward, a validator must prove to the contract that they have successfully exited the validator set after the bribe was offered. This proof consists of two components: a Merkle proof confirming their prior status as an active validator, and a valid BLS signature on a \texttt{VoluntaryExit} message (\cf~\Cref{fig:class_structures}). Note that such a trustless $\mathsf{PayToExit}$ contract is not feasible in Proof-of-Work, as there is no on-chain mechanism to verify that a miner has permanently shut down their hardware. The pseudocode for this contract is detailed in \Cref{alg:paytoexit}.

A key challenge in this model is the restaking loophole. A rational validator could exit, claim the bribe, transfer their funds to a fresh address, and rejoin the validator set. A more robust design could mitigate this by transforming the bribe into a recurring payment, rewarding the bribee as long as their withdrawn funds remain at the withdraw address, thus disincentivizing restaking. We leave the implementation of this extension of our $\mathsf{PayToExit}$ contract to future work.

%time-locking mechanism for the pay-to-exit contract to ensure that the bribee does not reuse its capital. We leave this extension to future work.

\subsection{The $\mathsf{PayToBias}$ bribery contract}\label{sec:RANDAO_bribery_market}

Ethereum's RANDAO is a distributed randomness beacon used to select block proposers. Its output for an epoch $e$ is calculated as $R^{e} := R^{e-1} \oplus (\oplus^{32}_{i=1}r^{e}_{i})$, where $r^{e}_i$ is the proposer's randomness contribution for each slot. Since each contribution $r^{e}_i$ is a BLS signature ($\mathsf{BLS.Sign}(\mathsf{sk}_i,e)$), it is unique and cannot be grind. The only influence a validator has is the binary choice to publish their block (revealing their contribution) or withhold it. This design is known to be biasable, as it gives proposers in the final ``tail'' slots of an epoch significant power to influence the outcome, which in turn determines the proposers for epoch $e+2$~\cite{alpturer2024optimal,nagy2025forking,tapolcai2025slot,wahrstatter2023selfish}.

A rational validator controlling $k$ tail slots can therefore pre-compute $2^k$ possible RANDAO outcomes and select the one most beneficial to themselves. Alternatively, they can auction this right to influence the outcome to other parties. This concept is known as the RANDAO bribery market, and our $\mathsf{PayToBias}$ contract is, to our knowledge, its first practical implementation.

The auction mechanism is straightforward. The manipulator (the validator controlling $k$ tail slots) first calls $\mathsf{offerBribe()}$ to reveal their $k$ randomness contributions,~\ie $\forall i\in[k]:r^e_{32-i}$. Once the epoch's other contributions are public, other validators bid on their preferred outcome by submitting a publish/withhold configuration ($\mathbf{c}_i\in\{0,1\}^{k}$). The manipulator is then incentivized to execute the configuration $\mathbf{c}_i$ that attracted the highest total value of bids.

Finally, to claim their payout, the manipulator must prove which configuration they executed by submitting the relevant block headers from the epoch's conclusion. The contract verifies the authenticity of these headers against the on-chain ``blockhash'' history, which is available for up to $8,191$ blocks ($\approx$27.3 hours) per EIP-2935~\cite{vitalik2020eip}. By inspecting the timestamps of the verified headers, the contract confirms which slots were missed, determines the executed configuration $\mathbf{c}_i$, and releases the funds. The complete pseudocode \revision{of the RANDAO bribery market} is available in \Cref{sec:paytobias_implementation}. \revision{In~\Cref{sec:pay_to_bias_incentives}, we compute a worst-case expected upper bound for the bribe amount to manipulate the RANDAO, whenever the manipulator has exactly $k$ tail slots.}
\subsection{Performance Evaluation and Qualitative Comparison}\label{sec:evaluation}

%------------------------------Begin Gas Performance Table--------------------------
\begin{table*}[t]
\centering
\begin{minipage}{\textwidth}
\begin{center}
 \resizebox{\textwidth}{!}{
 \begin{tabular}{l c c c} 
 \toprule
   \diagbox[]{Contract}{Function}&$\mathsf{Constructor(\cdot)}$&$\mathsf{offerBribe(\cdot)}$&$\mathsf{takeBribe(\cdot)}$ \\ [0.5ex] 
\midrule
$\mathsf{PayToAttest}$ & $805,876$ ($4.87$ USD) &$361,117$ ($2.18$ USD) & $252,110$ ($1.52$ USD) \\

$\mathsf{PayToExit}$ & $862,575$ ($5.21$ USD) &$225,123$ ($1.36$ USD) & $260,643$ ($1.58$ USD) \\

$\mathsf{PayToBias}$ & $1,296,940$ ($7.84$ USD) & $94,944$ ($0.57$ USD) & $137,897$ ($0.83$ USD) \\
\bottomrule
\end{tabular}
}
\caption{The most important functions' gas costs in our bribing contracts. USD costs were computed using ETH/USD exchange price and gas prices ($3,708$ USD/ETH and  $1.63$ Gwei gas price) on July 25th, 2025.}
\label{table:practical_performance_gas_costs}
\end{center}
\end{minipage}
\end{table*}
%------------------------------END Gas Performance Table---------------------------

We implemented our proposed bribery contracts in Solidity and evaluated the gas costs incurred by both the briber and the bribees. Our evaluation confirms that the contracts are highly efficient, imposing negligible on-chain financial costs for participants in these bribery markets (\cf~\Cref{table:practical_performance_gas_costs}). A qualitative comparison of our contracts with related work on bribery attacks is provided in \Cref{tab:attacks_qualitative_comparison}. The complete implementation is publicly available in our open-source repository.\footnote{\ifanonymous\url{https://anonymous.4open.science/r/bribery-zoo-70FD/}\else\url{https://github.com/0xSooki/bribery-zoo}\fi}

%We implemented (\ifanonymous\url{https://anonymous.4open.science/r/bribery-zoo-70FD/}\else\url{https://github.com/0xSooki/bribery-zoo}\fi) our proposed bribery contracts in Solidity and evaluated their incurred gas costs both for the briber and the bribee. All our bribery contracts are very efficient and, in practice, they incur essentially negligible financial costs for all participants to participate in the respective bribery markets,~\cf~\Cref{table:practical_performance_gas_costs}. We provide a qualitative comparison of our introduced bribery contracts with related works on bribery attacks in~\Cref{tab:attacks_qualitative_comparison}.

\subsection{Privacy-preserving Bribery Contracts}\label{sec:privacy_preserving_bribery}

The public nature of the blockchain may deter validators from participating in bribery markets, as their identifying details (e.g., public keys, validator indices) and transaction amounts are transparent. A natural extension of our work is therefore to incorporate privacy protections into these contracts. The core computations in our contracts, such as verifying Merkle proofs and BLS signatures, are well-suited for zero-knowledge techniques. These operations could be embedded within a zkSNARK, such as the widely-used Groth16 proof system~\cite{groth2016size}, which is particularly efficient for on-chain verification. While we leave the implementation of a fully private bribery contract to future work, this represents a promising direction for making such bribery markets more practical.

\revision{
\subsection{Responsible disclosure and ethical considerations}
We responsibly disclosed our findings to the Ethereum Foundation before publishing our work. We ran and tested our bribery contract implementations solely on a local test blockchain without interfering with the live Ethereum consensus.
}

%Validators might be reluctant to participate in a bribery market if their personal details (public keys, validator indices, bribe amounts) are transparent on a public blockchain. Hence, a natural extension of the above bribery contracts is to endow them with privacy protections. Our bribery contracts check Merkle membership proofs, verify BLS signatures, etc. These simple computations can be made zero-knowledge using general techniques such as zkSNARKs,~\eg with the popular proof system of Groth~\cite{groth2016size} that has a particularly on-chain friendly efficient verifier. We leave the implementation of this extension to future work.
\section{Bribee Incentives: \emph{``Is Resisting All in Vain?''}}\label{sec:incentives}

\subsection{$\mathsf{PayToAttest}$ incentives}\label{sec:pay_to_fork_incentives}

As established in~\Cref{sec:pay_to_fork}, a non-majority adversary can perform an ex-post reorg using a $\mathsf{PayToAttest}$ bribery contract to fork out an honest block $\honestblock{n+1}{e}$ at Slot $k+1$ for a chain string $\honestblock{n}{e}\honestblock{n+1}{e}\advblock{n+2}{e}\advblock{n+3}{e}$ within epoch $e$. According to the winning condition in~\Cref{eq:expost_winning_condition}, the briber must purchase attestations from a $(1-\alpha)\beta$ fraction of the $\frac{N(t)}{32}$ validators in the attestation committee. The protocol rewards each correct and timely attestation with an amount proportional to the total stake, approximately $c \cdot \sqrt{S(t)}$ ETH. Therefore, to succeed, the briber only needs to offer a total bribe slightly greater than the total rewards these validators would otherwise receive (\ie $(1-\alpha)\beta \frac{N(t)}{32}(c\sqrt{S(t)}$)).

Our analysis of the concrete costs, using network data from April 1, 2025, is presented in~\Cref{fig:pay_to_attest_ex_post_costs}. The results are alarming: for all successful attack parameters $(\alpha, \beta)$, the total bribe required is less than $0.09$ ETH ($\approx$ $334$ USD). This cost is significantly lower than the typical MEV found in a single Ethereum block~\cite{wahrstatter2023time}, implying that such an attack could be highly profitable.

%Recall from~\Cref{sec:pay_to_fork}, that for a chain string $\honestblock{n}{e}\honestblock{n+1}{e}\advblock{n+2}{e}\advblock{n+3}{e}$ in epoch $e$, a briber can fork out the honest block in Slot $k+1$. Specifically,~\Cref{eq:expost_winning_condition} tells us that the briber must buy $(1-\alpha)\beta$ fraction of the $\frac{N(t)}{32}$ attestations to succeed in its ex-post reorg attempt. The Ethereum PoS protocol pays $c\sqrt{S(t)}$ ether for each timely and correct attestation to any validator, where $c$ is a constant, and $S(t)$ denotes the total ether staked. 
%Typically, the briber must pay an extremely low price (\ie $(1-\alpha)\beta \frac{N(t)}{32}(c\sqrt{S(t)}$)) to ex-post reorg successfully the blockchain. For concrete values as of April 1, 2025,~\cf~\Cref{fig:pay_to_fork_bribe_amounts}. The bribe costs obtained in~\Cref{fig:pay_to_fork_bribe_amounts} are particularly alarming, since the MEV content of a typical block (as measured in~\cite{wahrstatter2023time}) is larger than the $\mathsf{PayToAttest}$  bribe cost to fork the blockchain. We find in~\Cref{fig:pay_to_fork_bribe_amounts} that for all successful ex-post reorg parameters $(\alpha,\beta)$, the briber needs to pay less than $0.09$ ether ($\approx334$ USD).

\begin{figure}[t!]
    \centering
    \subfloat[Ex-post reorg costs (ETH) using the $\mathsf{PayToAttest}$ contract for the chainstring $\honestblock{n}{e}\honestblock{n+1}{e}\advblock{n+2}{e}\advblock{n+3}{e}$,~\cf~\Cref{sec:pay_to_fork}. The attack is unsuccessful in the white region.]{{
    \resizebox{0.49\textwidth}{!}{% This file was created with matplot2tikz v0.4.0.
\begin{tikzpicture}

\definecolor{darkgray176}{RGB}{176,176,176}

\begin{axis}[
colorbar,
colorbar style={ylabel={\textbf{Bribe amount (ETH)}}},
colormap={mymap}{[1pt]
  rgb(0pt)=(1,1,0.8);
  rgb(1pt)=(1,0.929411764705882,0.627450980392157);
  rgb(2pt)=(0.996078431372549,0.850980392156863,0.462745098039216);
  rgb(3pt)=(0.996078431372549,0.698039215686274,0.298039215686275);
  rgb(4pt)=(0.992156862745098,0.552941176470588,0.235294117647059);
  rgb(5pt)=(0.988235294117647,0.305882352941176,0.164705882352941);
  rgb(6pt)=(0.890196078431372,0.101960784313725,0.109803921568627);
  rgb(7pt)=(0.741176470588235,0,0.149019607843137);
  rgb(8pt)=(0.501960784313725,0,0.149019607843137)
},
point meta max=0.0803152693076634,
point meta min=0.00286810981100726,
tick align=outside,
tick pos=left,
x grid style={darkgray176},
xlabel={\(\displaystyle \boldsymbol{\alpha}\)},
xmin=0, xmax=100,
xtick style={color=black},
xtick={0,9,19,29,39,49,59,69,79,89,99},
xticklabels={0.01,0.05,0.1,0.15,0.2,0.25,0.3,0.35,0.4,0.45,0.5},
xticklabel style={rotate=45.0},
y grid style={darkgray176},
ylabel={\(\displaystyle \boldsymbol{\beta}\)},
ymin=0, ymax=100,
ytick style={color=black},
ytick={0,9,19,29,39,49,59,69,79,89,99},
yticklabels={0.0,0.09,0.19,0.29,0.39,0.49,0.6,0.7,0.8,0.9,1.0}
]
\addplot graphics [includegraphics cmd=\pgfimage,xmin=0, xmax=100, ymin=0, ymax=100] {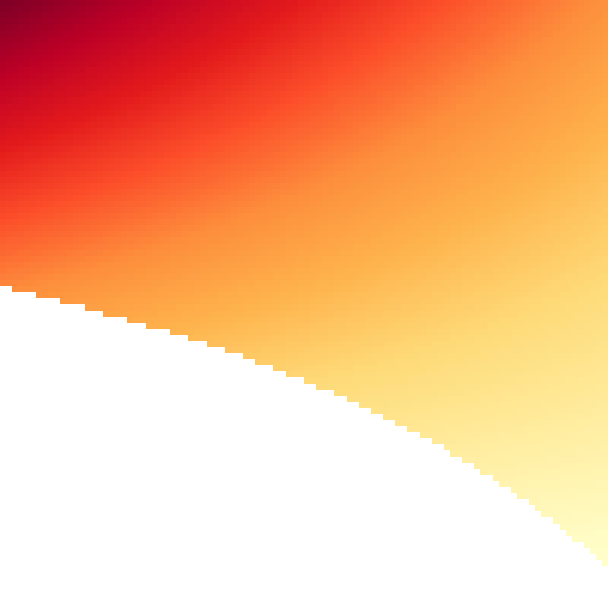};
\draw (axis cs:93.5,16.5) node[
  scale=0.75,
  text=black,
  rotate=0.0
]{\bfseries 0.01};
\draw (axis cs:71.5,27.5) node[
  scale=0.75,
  text=black,
  rotate=0.0
]{\bfseries 0.01};
\draw (axis cs:82.5,27.5) node[
  scale=0.75,
  text=black,
  rotate=0.0
]{\bfseries 0.01};
\draw (axis cs:93.5,27.5) node[
  scale=0.75,
  text=black,
  rotate=0.0
]{\bfseries 0.01};
\draw (axis cs:49.5,38.5) node[
  scale=0.75,
  text=black,
  rotate=0.0
]{\bfseries 0.02};
\draw (axis cs:60.5,38.5) node[
  scale=0.75,
  text=black,
  rotate=0.0
]{\bfseries 0.02};
\draw (axis cs:71.5,38.5) node[
  scale=0.75,
  text=black,
  rotate=0.0
]{\bfseries 0.02};
\draw (axis cs:82.5,38.5) node[
  scale=0.75,
  text=black,
  rotate=0.0
]{\bfseries 0.02};
\draw (axis cs:93.5,38.5) node[
  scale=0.75,
  text=black,
  rotate=0.0
]{\bfseries 0.02};
\draw (axis cs:16.5,49.5) node[
  scale=0.75,
  text=black,
  rotate=0.0
]{\bfseries 0.04};
\draw (axis cs:27.5,49.5) node[
  scale=0.75,
  text=black,
  rotate=0.0
]{\bfseries 0.03};
\draw (axis cs:38.5,49.5) node[
  scale=0.75,
  text=black,
  rotate=0.0
]{\bfseries 0.03};
\draw (axis cs:49.5,49.5) node[
  scale=0.75,
  text=black,
  rotate=0.0
]{\bfseries 0.03};
\draw (axis cs:60.5,49.5) node[
  scale=0.75,
  text=black,
  rotate=0.0
]{\bfseries 0.03};
\draw (axis cs:71.5,49.5) node[
  scale=0.75,
  text=black,
  rotate=0.0
]{\bfseries 0.03};
\draw (axis cs:82.5,49.5) node[
  scale=0.75,
  text=black,
  rotate=0.0
]{\bfseries 0.02};
\draw (axis cs:93.5,49.5) node[
  scale=0.75,
  text=black,
  rotate=0.0
]{\bfseries 0.02};
\draw (axis cs:5.5,60.5) node[
  scale=0.75,
  text=black,
  rotate=0.0
]{\bfseries 0.05};
\draw (axis cs:16.5,60.5) node[
  scale=0.75,
  text=black,
  rotate=0.0
]{\bfseries 0.04};
\draw (axis cs:27.5,60.5) node[
  scale=0.75,
  text=black,
  rotate=0.0
]{\bfseries 0.04};
\draw (axis cs:38.5,60.5) node[
  scale=0.75,
  text=black,
  rotate=0.0
]{\bfseries 0.04};
\draw (axis cs:49.5,60.5) node[
  scale=0.75,
  text=black,
  rotate=0.0
]{\bfseries 0.04};
\draw (axis cs:60.5,60.5) node[
  scale=0.75,
  text=black,
  rotate=0.0
]{\bfseries 0.03};
\draw (axis cs:71.5,60.5) node[
  scale=0.75,
  text=black,
  rotate=0.0
]{\bfseries 0.03};
\draw (axis cs:82.5,60.5) node[
  scale=0.75,
  text=black,
  rotate=0.0
]{\bfseries 0.03};
\draw (axis cs:93.5,60.5) node[
  scale=0.75,
  text=black,
  rotate=0.0
]{\bfseries 0.03};
\draw (axis cs:5.5,71.5) node[
  scale=0.75,
  text=black,
  rotate=0.0
]{\bfseries 0.06};
\draw (axis cs:16.5,71.5) node[
  scale=0.75,
  text=black,
  rotate=0.0
]{\bfseries 0.05};
\draw (axis cs:27.5,71.5) node[
  scale=0.75,
  text=black,
  rotate=0.0
]{\bfseries 0.05};
\draw (axis cs:38.5,71.5) node[
  scale=0.75,
  text=black,
  rotate=0.0
]{\bfseries 0.05};
\draw (axis cs:49.5,71.5) node[
  scale=0.75,
  text=black,
  rotate=0.0
]{\bfseries 0.04};
\draw (axis cs:60.5,71.5) node[
  scale=0.75,
  text=black,
  rotate=0.0
]{\bfseries 0.04};
\draw (axis cs:71.5,71.5) node[
  scale=0.75,
  text=black,
  rotate=0.0
]{\bfseries 0.04};
\draw (axis cs:82.5,71.5) node[
  scale=0.75,
  text=black,
  rotate=0.0
]{\bfseries 0.03};
\draw (axis cs:93.5,71.5) node[
  scale=0.75,
  text=black,
  rotate=0.0
]{\bfseries 0.03};
\draw (axis cs:5.5,82.5) node[
  scale=0.75,
  text=black,
  rotate=0.0
]{\bfseries 0.06};
\draw (axis cs:16.5,82.5) node[
  scale=0.75,
  text=black,
  rotate=0.0
]{\bfseries 0.06};
\draw (axis cs:27.5,82.5) node[
  scale=0.75,
  text=black,
  rotate=0.0
]{\bfseries 0.06};
\draw (axis cs:38.5,82.5) node[
  scale=0.75,
  text=black,
  rotate=0.0
]{\bfseries 0.05};
\draw (axis cs:49.5,82.5) node[
  scale=0.75,
  text=black,
  rotate=0.0
]{\bfseries 0.05};
\draw (axis cs:60.5,82.5) node[
  scale=0.75,
  text=black,
  rotate=0.0
]{\bfseries 0.05};
\draw (axis cs:71.5,82.5) node[
  scale=0.75,
  text=black,
  rotate=0.0
]{\bfseries 0.04};
\draw (axis cs:82.5,82.5) node[
  scale=0.75,
  text=black,
  rotate=0.0
]{\bfseries 0.04};
\draw (axis cs:93.5,82.5) node[
  scale=0.75,
  text=black,
  rotate=0.0
]{\bfseries 0.04};
\draw (axis cs:5.5,93.5) node[
  scale=0.75,
  text=black,
  rotate=0.0
]{\bfseries 0.07};
\draw (axis cs:16.5,93.5) node[
  scale=0.75,
  text=black,
  rotate=0.0
]{\bfseries 0.07};
\draw (axis cs:27.5,93.5) node[
  scale=0.75,
  text=black,
  rotate=0.0
]{\bfseries 0.07};
\draw (axis cs:38.5,93.5) node[
  scale=0.75,
  text=black,
  rotate=0.0
]{\bfseries 0.06};
\draw (axis cs:49.5,93.5) node[
  scale=0.75,
  text=black,
  rotate=0.0
]{\bfseries 0.06};
\draw (axis cs:60.5,93.5) node[
  scale=0.75,
  text=black,
  rotate=0.0
]{\bfseries 0.05};
\draw (axis cs:71.5,93.5) node[
  scale=0.75,
  text=black,
  rotate=0.0
]{\bfseries 0.05};
\draw (axis cs:82.5,93.5) node[
  scale=0.75,
  text=black,
  rotate=0.0
]{\bfseries 0.04};
\draw (axis cs:93.5,93.5) node[
  scale=0.75,
  text=black,
  rotate=0.0
]{\bfseries 0.04};
\end{axis}

\end{tikzpicture}}
    
    }\label{fig:pay_to_attest_ex_post_costs}}%
    \qquad
    \subfloat[$\mathsf{PayToExit}$ bribery contract costs (ETH) in the number of bribed validators for different $(Y,r)$ parameters (year, discount rate).]{{
    \resizebox{0.41\textwidth}{!}{\input{Figures/payToExitBribeAmounts}}
    }}%
    \caption{$\mathsf{PayToExit}$ and $\mathsf{PayToAttest}$ (ex-post reorg) bribery contract costs.}%
    \label{fig:pay_to_fork_bribe_amounts}%
\end{figure}

\subsection{$\mathsf{PayToExit}$ incentives}\label{sec:pay_to_exit_incentives}

We analyze the $\mathsf{PayToExit}$ bribery market through a game-theoretic lens, modeling it as a two-stage Stackelberg market exit game~\cite{stackelberg1934marktform} with a single leader (the briber) and multiple followers (the validators). First, the leader commits to a bribe $b$; second, each follower decides whether to accept the bribe and exit, or to refuse and remain. The objective of this analysis is to demonstrate the practicality and economic rationality behind this attack. To this end, we analyze a simplified game that captures these essential dynamics, as illustrated in~\Cref{fig:paytoexit_game}.

%Next, we study the $\mathsf{PayToExit}$ bribery market through a game-theoretic lens. We define the $\mathsf{PayToExit}$ game as a single-leader (briber) multiple-follower (bribees) Stackelberg market exit game~\cite{stackelberg1934marktform}. %First, the leader decides on a bribe $b$ and announces it via a smart contract. Second, the followers decide individually whether they a) take the bribe and exit the market, or b) refuse the bribe and stay in the market. The game mechanism is illustrated in Figure~\ref{fig:paytoexit_game}. Note that our objective with this analysis is to demonstrate the rationality and practicality of the $\mathsf{PayToExit}$ bribery attack. To this end, we define and analyze the simplest sensible game that captures such aspects.

\paragraph{Assumptions. }Our analysis is based on several simplifying assumptions. We assume the briber's (leader's) objective is to achieve a target market share of $\alpha^* > \alpha$, which requires a total of $k^*$ validators to exit. We assume the following.

\begin{description}
    \item[Homogeneous Validators] The follower set is homogeneous, \ie all rational validators have identical utility functions and a uniform stake ($=32$ ETH).
    \item[Sufficient Rational Validators] The network has $N$ total validators, where the number of rational validators is $n=(1-\alpha)\beta N$. This value is strictly larger than the required number of exits $k^*$.
    \item[Constant Bribe] A single, constant bribe $b$ is offered. This represents an upper bound for the required bribe, as it is the amount needed to convince the marginal (last) validator to exit,~\cf~\Cref{sec:apr-rect}.
    %\item[Minimum Gain Threshold] Rational validators are staying unless the utility from exiting exceeds the utility from staying by at least $\epsilon > 0$.
    \item[Perfect Information] The game has perfect information, as the $\mathsf{PayToExit}$ smart contract makes all relevant data -- such as the number of validators who have already committed to exiting -- publicly visible to all participants.
    \item[Simplified Exits] We do not model the sequential nature of the exit process, \ie queueing delays or specific timestamps; our follower game is single-shot.
\end{description}

%\paragraph{Assumptions. } We assume that the leader's objective is to achieve a $\alpha^* > \alpha$ market share that requires $k^*$ exiting validators. For simplicity, we assume:
%\begin{itemize}
%    \item A homogeneous set of followers of a single ``type'' with identical utility functions and unit market share (corresponding to $32$ ETH);
%    \item A total number of validators $N$, and the number of rational validators $n$ is strictly larger than $k^*$ (the number of exited validators in equilibrium);
%    \item A constant bribe $b$ -- this yields an upper limit for the equilibrium bribe, as it corresponds to the last exiting validator. Earlier exits can be motivated with lower bribes -- the bribery smart contract could handle dynamic payouts;
%    \item A game of perfect information -- the $\mathsf{PayToExit}$ bribery smart contract provides all necessary information to the players, including how many exiters have committed to the contract at any given time;
%    \item We do not explicitly model the sequential nature of exits, including exit timestamps and queueing delays for actually leaving the system.
%\end{itemize}

\begin{figure}[tb]
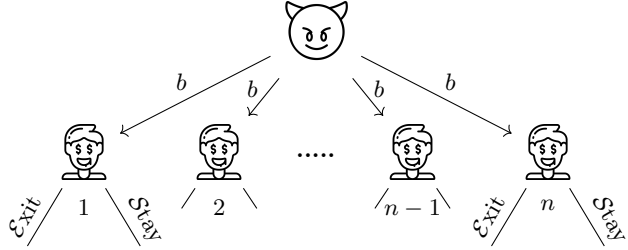

\centering
\begin{tikzpicture}[scale=0.5][
    circ/.style={circle, draw, minimum size=6mm, inner sep=0pt},
    smallcirc/.style={--, draw, minimum size=3mm, inner sep=0pt},
    >=Stealth
]

% Top node (Briber)
\node[] (briber) {\includegraphics[width=0.08\textwidth]{Figures/b.png}};

% Four children aligned horizontally
\node[below left=0.5cm and 2cm of briber] (c1) {\includegraphics[width=0.06\textwidth]{Figures/r.png}};
\node[below left=0.5cm and .2cm of briber] (c2) {\includegraphics[width=0.06\textwidth]{Figures/r.png}};
\node[below right=0.5cm and .2cm of briber] (c3) {\includegraphics[width=0.06\textwidth]{Figures/r.png}};
\node[below right=0.5cm and 2cm of briber] (c4) {\includegraphics[width=0.06\textwidth]{Figures/r.png}};
\node[below=25pt of briber] (ccc) {\textbf{.....}};

\node[below= 1pt of c1] {$1$};
\node[below= 1pt of c2] {$2$};
\node[below= 1pt of c3] {$n-1$};
\node[below= 1pt of c4] {$n$};

% Strategies
\draw (c1) -- ++(-1.5,-2.5) node[midway, left, rotate=60, xshift=8pt, yshift=7pt] {$\mathcal{E}$xit};
\draw (c1) -- ++(1.5,-2.5) node[midway, right, rotate=-60, xshift=-8pt, yshift=7pt] {$\mathcal{S}$tay};
\draw (c2) -- ++(-1,-1.5);
\draw (c2) -- ++(1,-1.5);
\draw (c3) -- ++(-1,-1.5);
\draw (c3) -- ++(1,-1.5);
\draw (c4) -- ++(-1.5,-2.5) node[midway, left, rotate=60, xshift=8pt, yshift=7pt] {$\mathcal{E}$xit};
\draw (c4) -- ++(1.5,-2.5) node[midway, right, rotate=-60, xshift=-8pt, yshift=7pt] {$\mathcal{S}$tay};

% Edges from Briber
\draw[->] (briber) -- (c1) node[midway, left, yshift=5pt] {$b$};
\draw[->] (briber) -- (c2) node[midway, left, xshift=1pt, yshift=3pt] {$b$};
\draw[->] (briber) -- (c3) node[midway, right, xshift=-1pt, yshift=3pt] {$b$};
\draw[->] (briber) -- (c4) node[midway, right, yshift=5pt] {$b$};

% LABELS
%\node[right=70pt of briber, align=center] {2) LEADER\\OPTIMIZATION};
%\node[below right=25pt and 70pt of briber, align=center] {1) FOLLOWER\\GAME};

% Dotted divider line
%\draw[dashed] ($(briber.south)+(-4.5,-0.2)$) -- ($(briber.south)+(5.5,-0.2)$);
%\draw[solid] ($(briber.south)+(-4.5,-3.2)$) -- ($(briber.south)+(5.5,-3.2)$) node[midway, below] {$\pi_0, \pi_1,..., \pi_n$};
\end{tikzpicture}
\caption{$\mathsf{PayToExit}$: a single-leader multiple-follower Stackelberg market exit game. A briber offers a bribe $b$ to $n$ rational validators to increase its market share $\alpha$.}
\label{fig:paytoexit_game}
\end{figure}

%\paragraph{Players. } A single briber acts as the leader; this briber wants to achieve a larger market share $\alpha^*$, requiring $k^*$ validators to exit the market. There are $n>k^*$ followers, all \emph{rational} validators, who aim to maximize their profit,~\cf~\Cref{fig:paytoexit_game}.

%\paragraph{Strategies/Action set. } The leader's strategy is $S_0 = \{b | b \in \mathbb{R^+}\}$, the bribe amount. For each follower, its strategy set is $S_i \in \{\mathcal{E}\mathrm{(xit})\}, \mathcal{S}\mathrm{(tay)}\}$, where $S_i = \mathcal{E}$ stands for follower $i$ accepting bribe $b$ and exiting, while $S_i = \mathcal{S}$ stands for follower $i$ refusing the bribe $b$ and staying in the Ethereum PoS consensus protocol.

\paragraph{Players and Strategies. } The game has two types of players,~\cf~\Cref{fig:paytoexit_game}. The leader is a single briber with $\alpha$ stake, whose objective is to increase their market share to $\alpha^*$, an outcome requiring $k^*$ validators to exit. The followers are the $n>k^*$ rational validators who aim to maximize their individual profit. The leader's strategy set, $S_0 = \{b | b \in \mathbb{R^+}\}$, is the choice of a bribe amount $b$. Each follower's strategy set is the binary choice $S_i \in \{\mathcal{E}, \mathcal{S}\}$, where $\mathcal{E}$ represents accepting the bribe and exiting the protocol, and $\mathcal{S}$ represents refusing the bribe and staying.

\paragraph{Leader's Utility.} The leader's (briber's) utility, $\Pi_0$, is defined in~\Cref{eq:leader_utility}. It consists of the opportunity gain from the increased market share, $g(k)$, plus the extra profit, $\Pi_0^e$, collected from sources exogenous to the consensus protocol, minus the total cost of the bribes, \ie $k \cdot b$ where $k$ is the number of validators who exit. Exogenous motivations of a $\mathsf{PayToExit}$ briber may be to disrupt Ethereum by creating liveness problems. Another, exogenous profit source of a $\mathsf{PayToExit}$ briber could be to attract scarce capital from the deposit contract to their own decentralized applications, akin to vampire attacks~\cite{hatfield2023simple}. The gain function $g(k)$ represents the net present value of the briber's increased future staking rewards. It is calculated by multiplying the annual reward gain, $\mathrm{R}(N-k)-\mathrm{R}(N)$, by a present value multiplier, $\mathrm{PV}(r,Y)$. More specifically, $\mathrm{R}(n)$ is the total annual staking reward for a single validator when $n$ validators are active: it consists of the attestation and block proposer rewards (sync committee rewards are not modeled)~\cite{pavloff2023ethereum}.
%and the mean MEV content of a block~\footnote{\url{https://dune.com/queries/1432505/2429301}}. 
We estimated the yearly average MEV amount per validator using on-chain data, \ie we computed the average payment value that block builders paid to Ethereum block proposers from September 15, 2022, to September 12, 2025.
The term $\mathrm{PV}(r,Y)$ is the present value multiplier, with a discount factor $r$ over a time horizon of $Y$ years as in~\cite{Brealey2020}. We set this discount rate to $r=0.08$: this value is higher than the return on risk-free assets like treasury bonds ($4-5\%$) to account for Ethereum's volatility, yet lower than returns on more speculative assets ($\ge10\%$), aligning with the crypto staking literature~\cite{cong2025tokenomics}. For the time horizon $Y$, we use an effective horizon based on the Present Value half-life~\cite{Brealey2020}, which is approximately $9$ years in this case (\cf~\Cref{sec:halflife}). %Finally, the annual reward $R$ per validator is closely estimated by summing the expected annual attestation and block rewards.

%\paragraph{Utilities. } The leader's utility corresponds to its opportunity gain $g(\cdot)$ in validatory power and its bribing costs: 
%\begin{equation}
%    \Pi_0 = g(k) - k \cdot b\enspace , 
%\end{equation}
%where $k$ is the number of exiting validators. The gain can be computed as follows:
%\begin{equation}
%    g(k) = \mathrm{R}(N-k) \cdot \mathrm{PV}(r,Y)
%\end{equation}
%where $R(n)$ refers to the total annual staking reward for a single validator with $n$ active validators, and PV$(r,Y) = \frac{1-(1+r)^{-Y}}{r}$ is the multiplier needed to calculate the net Present Value of $R(n)$ with discount factor $r$ and a $Y$ years horizon~\cite{Brealey2020}. 
%We use $r=0.08$: higher than the return of risk-free treasury bonds ($4-5\%$) given the volatility of Ethereum, but lower than that of the lesser-known ``risky assets'' ($>10\%$), and matching the crypto staking literature~\cite{cong2025tokenomics}. A good ``effective horizon'' $Y$ to calculate with is the PV half-life, which is $\approx9$ years in this case,~\cf\Cref{sec:halflife}. Now, $R$ can be closely estimated by the sum of expected annual attestation and block rewards yielding: 

\begin{equation}\label{eq:leader_utility}
    \Pi_0 = \underbrace{g(k)}_{\text{opp. gain}} + \underbrace{\Pi_0^e}_{\text{extra profit}} - \underbrace{k \cdot b}_{\text{cost}} = \alpha \cdot N  (R(N-k) - R(N)) \cdot \mathrm{PV}(r,Y) + \Pi_0^e - k \cdot b 
\end{equation}
where 
%\begin{equation}
%    \label{eq:util_L}
%    \Pi_0 = \overbrace{\underbrace{\underbrace{32}_{stake} \cdot \bigg(\underbrace{\frac{2940.21}{\sqrt{N-k}}}_{\text{attestation rew.}} + \underbrace{\frac{1078543.3}{N-k}}_{\text{block proposal rew.}}\bigg)}_{\mathrm{R}(N-k)} \cdot \underbrace{\frac{1.08^{-9}}{0.08\cdot100}}_{\mathrm{PV}(r,Y)}}^{\text{opportunity gain }g(k)} - \overbrace{b \cdot k}^{\text{cost}}

\begin{equation}
    \label{eq:util_L}
    R(n) = {\underbrace{32}_{stake} \cdot \bigg(\underbrace{\frac{2940.21}{\sqrt{n}}}_{\text{protocol rewards}} + \underbrace{\frac{1078543.3}{n}}_{\text{estimated MEV}}\bigg)} \quad \mathrm{and} \quad \mathrm{PV}(r,Y) = \frac{1.08^{-9}}{0.08\cdot100}
\end{equation}
\paragraph{Follower's Utility. } A follower's (validator's) utility, $\Pi_i$, is defined in~\Cref{eq:util_F}, depending on their choice to either exit or stay. We define the utility for each action in line with our assumption on a worst-case flat bribe $b$: assuming $k-1$ other validators have already exited. The utility when the follower exits is the value of the bribe $b$: the exiting validator forgoes its future staking rewards, realizing an opportunity cost. %This cost is equivalent to the gain function $c(k)=g(k-1)$, as the validator is the $k$-th to consider exiting. 
Conversely, the utility when staying is the present value of the increased future staking rewards that result from $k-1$ other validators leaving the active set. 

{\renewcommand{\arraystretch}{1.2}
\begin{equation}
    \label{eq:util_F}
    \Pi_i = \left\{\begin{array}{lcl}
        b & \text{ if } & S_i = \mathcal{E} \\
        R(n-k+1) \cdot \mathrm{PV}(r,Y) & \text{ if } & S_i = \mathcal{S} 
    \end{array}\right.
\end{equation}}

%The followers' utility if exiting is the accepted bribe minus the opportunity cost $c(\cdot)$ of giving up future staking rewards:
%\begin{equation}
%    \Pi_i^{\mathcal{E}} = b - c(k) = b - g(k-1)\enspace,
%\end{equation}
%as it corresponds to the $k$-th potentially exiting validator (in line with our assumption on a worst-case flat bribe $b$).
%If staying, the follower will enjoy an opportunity gain $g$ in the form of increased staking rewards:
%\begin{equation}
%    \Pi_i^{\mathcal{S}} = g(k-1)\enspace,
%\end{equation}
%with $k-1$ validators exiting.

\paragraph{Equilibrium analysis. } We solve the game using backward induction. We first find the equilibrium in the follower subgame and then substitute this result into the leader's problem to determine the optimal bribe, $b^*$. An equilibrium in the follower subgame is a state where no validator has an incentive to unilaterally change their strategy, \ie to switch from exit to stay and vice versa. For an equilibrium state with $k$ exiting validators, this means $b - R(n-k+1) \cdot \mathrm{PV}(r,Y) \geq 0$ and $R(n-k) \cdot \mathrm{PV}(r,Y) - b \leq 0$. These conditions establish the range for a bribe $b$ that results in exactly $k$ exits. Hence, to incentivize the precise number of desired exits (\ie $k^*$), the leader must offer an optimal bribe $b^*$ such that

\begin{equation}
\label{eq:eq_bribe}
    R(n-k+1) \cdot \mathrm{PV}(r,Y)\leq b^* \leq R(n-k) \cdot \mathrm{PV}(r,Y).
\end{equation}

Note that a rational briber requires $\Pi_0 > 0$ to participate in the game. A quick calculation shows that this requires extra exogenous profit $\Pi_0^e > 0$ for all optimal bribes $b^*$. In any realistic (and now standard) cryptocurrency adversarial model, one endows the adversary with exogenous motivations~\cite{ford2019rationality}.   

%{\renewcommand{\arraystretch}{1.2}
%\begin{equation}
%    \label{eq:eq_bribe}
%    \forall {i}: S_i^* = \left\{\begin{array}{ll}
%       \mathcal{E} & \phantom{opera}\text{if } b - g(k-1) \ge g(k-1) \\
%       \mathcal{S} & \phantom{opera}\text{if } b - g(k-1) \le g(k-1) \\
%    \end{array}\right.
%\end{equation}}

%We solve the game via backward induction, first solving the follower game, then writing it back to find the optimal leader strategy, \ie bribe $b^*$. The follower game is characterized by an equilibrium state in which no follower has the incentive to deviate unilaterally from its strategy. In such a state, the following best response conditions hold 
%\begin{equation}
%    \forall {i} | S_i^* = \mathcal{E}: b - g(k-1) \geq g(k+1) \quad \mathrm{and} \quad \forall{i} | S_i^* = \mathcal{S}: g(k-1) \geq b - g(k)\enspace, 
%\end{equation}
%which yields the equilibrium bribe for the exact number of exits desired $k^*$:
%\begin{equation}\label{eq:exit_equilibrium_bribe}
%    b^* = g(k^*) + g(k^*+1)\enspace.
%\end{equation}

\begin{figure}[t!]
    \centering
    \subfloat[$\mathsf{PayToExit}$ optimal bribe costs]{{
        \resizebox{0.49\textwidth}{!}{% This file was created with matplot2tikz v0.4.0.
\begin{tikzpicture}

\definecolor{darkgray176}{RGB}{176,176,176}

\begin{axis}[
colorbar,
colorbar style={ylabel={\textbf{Bribe Amount (Billion USD)}}},
colormap={mymap}{[1pt]
  rgb(0pt)=(1,1,0.898039215686275);
  rgb(1pt)=(1,0.968627450980392,0.737254901960784);
  rgb(2pt)=(0.996078431372549,0.890196078431372,0.568627450980392);
  rgb(3pt)=(0.996078431372549,0.768627450980392,0.309803921568627);
  rgb(4pt)=(0.996078431372549,0.6,0.16078431372549);
  rgb(5pt)=(0.925490196078431,0.43921568627451,0.0784313725490196);
  rgb(6pt)=(0.8,0.298039215686275,0.00784313725490196);
  rgb(7pt)=(0.6,0.203921568627451,0.0156862745098039);
  rgb(8pt)=(0.4,0.145098039215686,0.0235294117647059)
},
point meta max=1092.56154311449,
point meta min=0.613841538136861,
tick align=outside,
tick pos=left,
x grid style={darkgray176},
xlabel={\(\displaystyle \boldsymbol{\alpha}\)},
xmin=0, xmax=100,
xtick style={color=black},
xtick={0,9,19,29,39,49,59,69,79,89,99},
xticklabel style={rotate=45.0},
xticklabels={0.01,0.05,0.1,0.15,0.2,0.25,0.3,0.35,0.4,0.45,0.5},
y grid style={darkgray176},
ylabel={\(\displaystyle \boldsymbol{\alpha^*}\)},
ymin=0, ymax=100,
ytick style={color=black},
ytick={0,9,19,29,39,49,59,69,79,89,99},
yticklabels={0.01,0.05,0.1,0.15,0.2,0.25,0.3,0.35,0.4,0.45,0.5}
]
\addplot graphics [includegraphics cmd=\pgfimage,xmin=0, xmax=100, ymin=0, ymax=100] {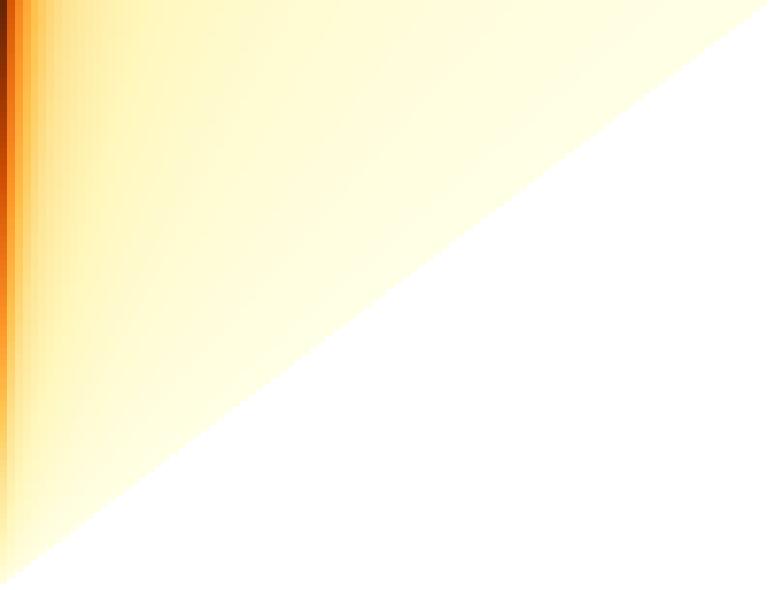};
\draw (axis cs:5.5,14.5) node[
  scale=0.75,
  text=black,
  rotate=0.0
]{\bfseries 59};
\draw (axis cs:5.5,23.5) node[
  scale=0.75,
  text=black,
  rotate=0.0
]{\bfseries 103};
\draw (axis cs:14.5,23.5) node[
  scale=0.75,
  text=black,
  rotate=0.0
]{\bfseries 29};
\draw (axis cs:5.5,32.5) node[
  scale=0.75,
  text=black,
  rotate=0.0
]{\bfseries 141};
\draw (axis cs:14.5,32.5) node[
  scale=0.75,
  text=black,
  rotate=0.0
]{\bfseries 53};
\draw (axis cs:23.5,32.5) node[
  scale=0.75,
  text=black,
  rotate=0.0
]{\bfseries 20};
\draw (axis cs:5.5,41.5) node[
  scale=0.75,
  text=black,
  rotate=0.0
]{\bfseries 176};
\draw (axis cs:14.5,41.5) node[
  scale=0.75,
  text=black,
  rotate=0.0
]{\bfseries 74};
\draw (axis cs:23.5,41.5) node[
  scale=0.75,
  text=black,
  rotate=0.0
]{\bfseries 37};
\draw (axis cs:32.5,41.5) node[
  scale=0.75,
  text=black,
  rotate=0.0
]{\bfseries 15};
\draw (axis cs:5.5,50.5) node[
  scale=0.75,
  text=black,
  rotate=0.0
]{\bfseries 209};
\draw (axis cs:14.5,50.5) node[
  scale=0.75,
  text=black,
  rotate=0.0
]{\bfseries 93};
\draw (axis cs:23.5,50.5) node[
  scale=0.75,
  text=black,
  rotate=0.0
]{\bfseries 51};
\draw (axis cs:32.5,50.5) node[
  scale=0.75,
  text=black,
  rotate=0.0
]{\bfseries 28};
\draw (axis cs:41.5,50.5) node[
  scale=0.75,
  text=black,
  rotate=0.0
]{\bfseries 12};
\draw (axis cs:5.5,59.5) node[
  scale=0.75,
  text=black,
  rotate=0.0
]{\bfseries 241};
\draw (axis cs:14.5,59.5) node[
  scale=0.75,
  text=black,
  rotate=0.0
]{\bfseries 110};
\draw (axis cs:23.5,59.5) node[
  scale=0.75,
  text=black,
  rotate=0.0
]{\bfseries 65};
\draw (axis cs:32.5,59.5) node[
  scale=0.75,
  text=black,
  rotate=0.0
]{\bfseries 40};
\draw (axis cs:41.5,59.5) node[
  scale=0.75,
  text=black,
  rotate=0.0
]{\bfseries 23};
\draw (axis cs:50.5,59.5) node[
  scale=0.75,
  text=black,
  rotate=0.0
]{\bfseries 10};
\draw (axis cs:5.5,68.5) node[
  scale=0.75,
  text=black,
  rotate=0.0
]{\bfseries 272};
\draw (axis cs:14.5,68.5) node[
  scale=0.75,
  text=black,
  rotate=0.0
]{\bfseries 127};
\draw (axis cs:23.5,68.5) node[
  scale=0.75,
  text=black,
  rotate=0.0
]{\bfseries 78};
\draw (axis cs:32.5,68.5) node[
  scale=0.75,
  text=black,
  rotate=0.0
]{\bfseries 51};
\draw (axis cs:41.5,68.5) node[
  scale=0.75,
  text=black,
  rotate=0.0
]{\bfseries 32};
\draw (axis cs:50.5,68.5) node[
  scale=0.75,
  text=black,
  rotate=0.0
]{\bfseries 19};
\draw (axis cs:59.5,68.5) node[
  scale=0.75,
  text=black,
  rotate=0.0
]{\bfseries 9};
\draw (axis cs:5.5,77.5) node[
  scale=0.75,
  text=black,
  rotate=0.0
]{\bfseries 302};
\draw (axis cs:14.5,77.5) node[
  scale=0.75,
  text=black,
  rotate=0.0
]{\bfseries 143};
\draw (axis cs:23.5,77.5) node[
  scale=0.75,
  text=black,
  rotate=0.0
]{\bfseries 90};
\draw (axis cs:32.5,77.5) node[
  scale=0.75,
  text=black,
  rotate=0.0
]{\bfseries 61};
\draw (axis cs:41.5,77.5) node[
  scale=0.75,
  text=black,
  rotate=0.0
]{\bfseries 42};
\draw (axis cs:50.5,77.5) node[
  scale=0.75,
  text=black,
  rotate=0.0
]{\bfseries 27};
\draw (axis cs:59.5,77.5) node[
  scale=0.75,
  text=black,
  rotate=0.0
]{\bfseries 17};
\draw (axis cs:68.5,77.5) node[
  scale=0.75,
  text=black,
  rotate=0.0
]{\bfseries 8};
\draw (axis cs:5.5,86.5) node[
  scale=0.75,
  text=black,
  rotate=0.0
]{\bfseries 332};
\draw (axis cs:14.5,86.5) node[
  scale=0.75,
  text=black,
  rotate=0.0
]{\bfseries 159};
\draw (axis cs:23.5,86.5) node[
  scale=0.75,
  text=black,
  rotate=0.0
]{\bfseries 101};
\draw (axis cs:32.5,86.5) node[
  scale=0.75,
  text=black,
  rotate=0.0
]{\bfseries 70};
\draw (axis cs:41.5,86.5) node[
  scale=0.75,
  text=black,
  rotate=0.0
]{\bfseries 50};
\draw (axis cs:50.5,86.5) node[
  scale=0.75,
  text=black,
  rotate=0.0
]{\bfseries 35};
\draw (axis cs:59.5,86.5) node[
  scale=0.75,
  text=black,
  rotate=0.0
]{\bfseries 24};
\draw (axis cs:68.5,86.5) node[
  scale=0.75,
  text=black,
  rotate=0.0
]{\bfseries 15};
\draw (axis cs:77.5,86.5) node[
  scale=0.75,
  text=black,
  rotate=0.0
]{\bfseries 7};
\draw (axis cs:5.5,95.5) node[
  scale=0.75,
  text=black,
  rotate=0.0
]{\bfseries 361};
\draw (axis cs:14.5,95.5) node[
  scale=0.75,
  text=black,
  rotate=0.0
]{\bfseries 174};
\draw (axis cs:23.5,95.5) node[
  scale=0.75,
  text=black,
  rotate=0.0
]{\bfseries 112};
\draw (axis cs:32.5,95.5) node[
  scale=0.75,
  text=black,
  rotate=0.0
]{\bfseries 80};
\draw (axis cs:41.5,95.5) node[
  scale=0.75,
  text=black,
  rotate=0.0
]{\bfseries 58};
\draw (axis cs:50.5,95.5) node[
  scale=0.75,
  text=black,
  rotate=0.0
]{\bfseries 43};
\draw (axis cs:59.5,95.5) node[
  scale=0.75,
  text=black,
  rotate=0.0
]{\bfseries 31};
\draw (axis cs:68.5,95.5) node[
  scale=0.75,
  text=black,
  rotate=0.0
]{\bfseries 21};
\draw (axis cs:77.5,95.5) node[
  scale=0.75,
  text=black,
  rotate=0.0
]{\bfseries 13};
\draw (axis cs:86.5,95.5) node[
  scale=0.75,
  text=black,
  rotate=0.0
]{\bfseries 6};
\end{axis}

\end{tikzpicture}}
}\label{fig:pay_to_exit_bribe_heatmap}}%
    \centering
    \subfloat[$\mathsf{PayToExit}$ attack duration]{{
        \resizebox{0.48\textwidth}{!}{% This file was created with matplot2tikz v0.4.0.
\begin{tikzpicture}

\definecolor{darkgray176}{RGB}{176,176,176}

\begin{axis}[
colorbar,
colorbar style={ylabel={\textbf{Attack duration (days)}}},
colormap={mymap}{[1pt]
  rgb(0pt)=(1,1,0.850980392156863);
  rgb(1pt)=(0.929411764705882,0.972549019607843,0.694117647058824);
  rgb(2pt)=(0.780392156862745,0.913725490196078,0.705882352941177);
  rgb(3pt)=(0.498039215686275,0.803921568627451,0.733333333333333);
  rgb(4pt)=(0.254901960784314,0.713725490196078,0.768627450980392);
  rgb(5pt)=(0.113725490196078,0.568627450980392,0.752941176470588);
  rgb(6pt)=(0.133333333333333,0.368627450980392,0.658823529411765);
  rgb(7pt)=(0.145098039215686,0.203921568627451,0.580392156862745);
  rgb(8pt)=(0.0313725490196078,0.113725490196078,0.345098039215686)
},
point meta max=611.743888888889,
point meta min=6.17944444444444,
tick align=outside,
tick pos=left,
x grid style={darkgray176},
xlabel={\(\displaystyle \boldsymbol{\alpha}\)},
xmin=0, xmax=100,
xtick style={color=black},
xtick={0,9,19,29,39,49,59,69,79,89,99},
xticklabels={0.01,0.05,0.1,0.15,0.2,0.25,0.3,0.35,0.4,0.45,0.5},
xticklabel style={rotate=45.0},
y grid style={darkgray176},
ylabel={\(\displaystyle \boldsymbol{\alpha^*}\)},
ymin=0, ymax=100,
ytick style={color=black},
ytick={0,9,19,29,39,49,59,69,79,89,99},
yticklabels={0.01,0.05,0.1,0.15,0.2,0.25,0.3,0.35,0.4,0.45,0.5}
]
\addplot graphics [includegraphics cmd=\pgfimage,xmin=0, xmax=100, ymin=0, ymax=100] {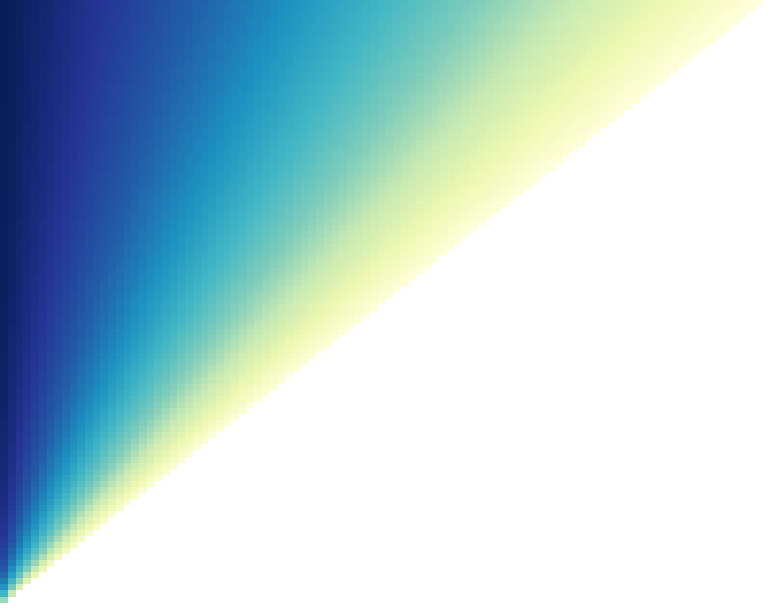};
\draw (axis cs:5.5,15.5) node[
  scale=0.75,
  text=black,
  rotate=0.0
]{\bfseries 367};
\draw (axis cs:5.5,25.5) node[
  scale=0.75,
  text=black,
  rotate=0.0
]{\bfseries 462};
\draw (axis cs:15.5,25.5) node[
  scale=0.75,
  text=black,
  rotate=0.0
]{\bfseries 231};
\draw (axis cs:5.5,35.5) node[
  scale=0.75,
  text=black,
  rotate=0.0
]{\bfseries 506};
\draw (axis cs:15.5,35.5) node[
  scale=0.75,
  text=black,
  rotate=0.0
]{\bfseries 337};
\draw (axis cs:25.5,35.5) node[
  scale=0.75,
  text=black,
  rotate=0.0
]{\bfseries 169};
\draw (axis cs:5.5,45.5) node[
  scale=0.75,
  text=black,
  rotate=0.0
]{\bfseries 531};
\draw (axis cs:15.5,45.5) node[
  scale=0.75,
  text=black,
  rotate=0.0
]{\bfseries 398};
\draw (axis cs:25.5,45.5) node[
  scale=0.75,
  text=black,
  rotate=0.0
]{\bfseries 266};
\draw (axis cs:35.5,45.5) node[
  scale=0.75,
  text=black,
  rotate=0.0
]{\bfseries 133};
\draw (axis cs:5.5,55.5) node[
  scale=0.75,
  text=black,
  rotate=0.0
]{\bfseries 547};
\draw (axis cs:15.5,55.5) node[
  scale=0.75,
  text=black,
  rotate=0.0
]{\bfseries 438};
\draw (axis cs:25.5,55.5) node[
  scale=0.75,
  text=black,
  rotate=0.0
]{\bfseries 328};
\draw (axis cs:35.5,55.5) node[
  scale=0.75,
  text=black,
  rotate=0.0
]{\bfseries 219};
\draw (axis cs:45.5,55.5) node[
  scale=0.75,
  text=black,
  rotate=0.0
]{\bfseries 109};
\draw (axis cs:5.5,65.5) node[
  scale=0.75,
  text=black,
  rotate=0.0
]{\bfseries 559};
\draw (axis cs:15.5,65.5) node[
  scale=0.75,
  text=black,
  rotate=0.0
]{\bfseries 466};
\draw (axis cs:25.5,65.5) node[
  scale=0.75,
  text=black,
  rotate=0.0
]{\bfseries 373};
\draw (axis cs:35.5,65.5) node[
  scale=0.75,
  text=black,
  rotate=0.0
]{\bfseries 279};
\draw (axis cs:45.5,65.5) node[
  scale=0.75,
  text=black,
  rotate=0.0
]{\bfseries 186};
\draw (axis cs:55.5,65.5) node[
  scale=0.75,
  text=black,
  rotate=0.0
]{\bfseries 93};
\draw (axis cs:5.5,75.5) node[
  scale=0.75,
  text=black,
  rotate=0.0
]{\bfseries 567};
\draw (axis cs:15.5,75.5) node[
  scale=0.75,
  text=black,
  rotate=0.0
]{\bfseries 486};
\draw (axis cs:25.5,75.5) node[
  scale=0.75,
  text=black,
  rotate=0.0
]{\bfseries 405};
\draw (axis cs:35.5,75.5) node[
  scale=0.75,
  text=black,
  rotate=0.0
]{\bfseries 324};
\draw (axis cs:45.5,75.5) node[
  scale=0.75,
  text=black,
  rotate=0.0
]{\bfseries 243};
\draw (axis cs:55.5,75.5) node[
  scale=0.75,
  text=black,
  rotate=0.0
]{\bfseries 162};
\draw (axis cs:65.5,75.5) node[
  scale=0.75,
  text=black,
  rotate=0.0
]{\bfseries 81};
\draw (axis cs:5.5,85.5) node[
  scale=0.75,
  text=black,
  rotate=0.0
]{\bfseries 574};
\draw (axis cs:15.5,85.5) node[
  scale=0.75,
  text=black,
  rotate=0.0
]{\bfseries 502};
\draw (axis cs:25.5,85.5) node[
  scale=0.75,
  text=black,
  rotate=0.0
]{\bfseries 430};
\draw (axis cs:35.5,85.5) node[
  scale=0.75,
  text=black,
  rotate=0.0
]{\bfseries 359};
\draw (axis cs:45.5,85.5) node[
  scale=0.75,
  text=black,
  rotate=0.0
]{\bfseries 287};
\draw (axis cs:55.5,85.5) node[
  scale=0.75,
  text=black,
  rotate=0.0
]{\bfseries 215};
\draw (axis cs:65.5,85.5) node[
  scale=0.75,
  text=black,
  rotate=0.0
]{\bfseries 143};
\draw (axis cs:75.5,85.5) node[
  scale=0.75,
  text=black,
  rotate=0.0
]{\bfseries 72};
\draw (axis cs:5.5,95.5) node[
  scale=0.75,
  text=black,
  rotate=0.0
]{\bfseries 579};
\draw (axis cs:15.5,95.5) node[
  scale=0.75,
  text=black,
  rotate=0.0
]{\bfseries 515};
\draw (axis cs:25.5,95.5) node[
  scale=0.75,
  text=black,
  rotate=0.0
]{\bfseries 450};
\draw (axis cs:35.5,95.5) node[
  scale=0.75,
  text=black,
  rotate=0.0
]{\bfseries 386};
\draw (axis cs:45.5,95.5) node[
  scale=0.75,
  text=black,
  rotate=0.0
]{\bfseries 322};
\draw (axis cs:55.5,95.5) node[
  scale=0.75,
  text=black,
  rotate=0.0
]{\bfseries 257};
\draw (axis cs:65.5,95.5) node[
  scale=0.75,
  text=black,
  rotate=0.0
]{\bfseries 193};
\draw (axis cs:75.5,95.5) node[
  scale=0.75,
  text=black,
  rotate=0.0
]{\bfseries 129};
\draw (axis cs:85.5,95.5) node[
  scale=0.75,
  text=black,
  rotate=0.0
]{\bfseries 64};
\end{axis}

\end{tikzpicture}\label{fig:pay_to_exit_attack_duration}}
    }}%
    \caption{$\mathsf{PayToExit}$ bribe costs (in USD) and attack duration (in days) whenever the briber wants to increase its relative staking power from $\alpha$ to $\alpha^{*}$, ($\alpha\leq\alpha^{*}$).}%
    \label{fig:pay_to_exit_measurements}
\end{figure}

\paragraph{Numerical Example. } At the time of writing, there are approximately $N = \num{1123611}$ active validators, the largest staking entity (Lido) controls about $23.9\%$ of the stake. If this entity aimed to amass a $33\%$ share -- a threshold that could threaten the liveness of the Ethereum protocol -- it would need to compel a significant number of other validators to exit. Our calculations show that to reach this target, a total of $k^* \approx \num{317982}$ validators would need to leave the active validator set. Using the equilibrium bribe $b^*$ from~\Cref{eq:eq_bribe}, this scenario would require a bribe of $9.23$ ETH per validator. As of September 20, 2025, this translates to approximately $\num{41332.91}$ USD for each exiting validator. To make this scenario plausible for a rational briber, the extra exogenous profit should be more than $\num{250541.1}$ $\frac{ETH}{\mathsf{year}}$, \revision{thus, limiting the attack's realistic threat profile.}

\subsection{$\mathsf{PayToBias}$ incentives}\label{sec:pay_to_bias_incentives}
\revision{We give a worst-case upper bound on the expected bribe amount whenever a bribee with $(1-\alpha)\beta$ stake auctions off the manipulative power of $k$ consecutive tail slots in epoch $e$. The briber must compensate the bribee with the costs incurred in epoch $e$ and $e+2$, \ie let us denote this by $c_e$ and $c_{e+2}$. First, in epoch $e$, in the worst case, the briber may request the bribee to forfeit proposing blocks in all its $k$ consecutive slots. Thus, the briber must pay at least the block rewards, transaction fees, missed attestation rewards, and maximal extractable value that the bribee gives up in those $k$ tail slots. We estimate $c_e\approx k\cdot \mathsf{v_b}$, where $\mathsf{v_b}$ is the value of a single Ethereum block and the attestation rewards of the bribee's validators. Second, the bribee could have proposed $X\sim Binom(32, \alpha)$ blocks in epoch $e+2$, but with the different seed $Y\sim Binom(32,\alpha)$. In the worst case $X>Y$, we shall pay for the difference $X-Y$, which is expected to be $c_2\approx\mathbb{E}(\max(X-Y,0))\cdot\mathsf{v_b}$. Concretely, for $1-\alpha=0.2,k=3$, we have that the expected bribe amount is $c_e+c_{e+2}\approx 0.965~\mathit{ETH}$.}
\revision{A thorough analysis of} the economic incentives in $\mathsf{PayToBias}$ markets, often called RANDAO bribery markets, are the subject of extensive study in concurrent research, such as the work of Alptürer~\cite{alpturer2025RANDAO}. Given this focus in other contemporary work, a detailed (game-theoretic) analysis of this specific market is beyond the scope of our paper.

%$\mathsf{PayToBias}$ markets (often referred to as the RANDAO bribery markets) are extensively studied in concurrent works,~\eg by Alptürer~\cite{alpturer2025RANDAO}. Therefore, in this work, we refrain from analyzing the complex incentives implied by those markets.

\subsection{Countermeasures}\label{sec:countermeasures}

We envision several potential countermeasures, which can be categorized as either general economic deterrents or targeted protocol upgrades.

\paragraph{General Defenses. } These defenses apply broadly to all three of our proposed attacks by altering the economic incentives of rational validators, thereby protecting safety, liveness, and fairness.

\begin{description}
    \item[Increased Protocol Rewards] A natural countermeasure is to increase the rewards for honest participation. This raises the opportunity cost of deviation, making any bribery attack more expensive. However, this approach likely requires higher inflation, and the trade-off between protocol security and economic stability warrants further study.
    \item[Whistleblowing Incentives] Bribery attacks, as a form of collusion, can be thwarted by enabling participants to whistleblow in a publicly verifiable manner. Protocol-level incentives for whistleblowing could disrupt any bribery market, as explored in~\cite{kelkar2025breaking}.
\end{description}

\paragraph{Targeted Defenses. }These defenses are specific consensus upgrades that would mitigate individual attacks by making them technically infeasible.

\begin{description}
    \item[\revision{Faster} Finality] 
    \revision{To enhance safety, new consensus protocols aim to achieve faster finality, \eg three slots~\cite{d20243}, making long-range attacks impossible. However, not even these can prevent (ex-post) reorgs on the shorter, unfinalized head of the blockchain enabled by $\mathsf{PayToAttest}$.}
    %\revision{We note that simultaneous deterministic finality and dynamic availability in an asynchronous permissionless network is indeed impossible. However, to reduce the attack surface, hybrid protocols such as Orbit SSF would mitigate a large portion of bribery attempts enabled by the PayToAttest attack.}
    %To protect safety, a proposal for single-slot finality~\cite{d2023simple} would finalize blocks almost immediately after they are proposed. This would prevent the reorganizations enabled by the $\mathsf{PayToAttest}$ attack.
    \item[Unbiasable Randomness] To protect fairness, the deployment of a cryptographically secure and unbiasable randomness beacon would directly mitigate the $\mathsf{PayToBias}$ attack by removing the beacon manipulator's ability to influence proposer selection.
    \item[Time-locked Stake Retrieval] To protect liveness, the protocol could enforce a significantly extended withdrawal period for an exiting validator's stake. The long delay imposes a direct economic cost due to the time value of money, increasing the opportunity cost of exiting and making the $\mathsf{PayToExit}$ attack financially impractical.
\end{description}

%We detail possible countermeasures against bribery attacks as follows.
%\begin{description}
%    \item[Increased inflation and protocol rewards] A natural countermeasure for the protocol designer to defend against short-term consensus manipulation attacks is to increase the protocol rewards for all validators. However, increasing inflation possibly causes downward pressure on the price of ether, which is undesirable. We leave it to future work to assess the level of necessary inflation and to quantify the trade-off between inflation and the effectiveness of increased protocol robustness against trustless bribery attacks.
%    \item[Motivating whistleblowing] Bribery attacks, sometimes collusions, can be thwarted by enabling any bribery participant to whistleblow in a publicly verifiable manner.  A potential mitigation could be to incentivize whistleblowing on the protocol level as it was extensively studied recently in~\cite{cryptoeprint:2025/1582}. 
%    \item[Consensus upgrades] A recently discussed protocol change in the community is the single-slot finality proposal~\cite{d2023simple}. If a block was final a single slot later than it had been proposed, then forks and reorgs would not be possible anymore. Thus, it would make our proposed $\mathsf{PayToAttest}$ contract likely obsolete. Similarly, our $\mathsf{PayToBias}$ bribery contract would not be possible, if the Ethereum protocol had deployed a cryptographically secure and unbiasable distributed randomness beacon to select block proposer validators.
%\end{description}
\revision{
\section{Related work}\label{sec:related_work}
The first works studying short-term consensus manipulations (\cf~\Cref{tab:attacks_qualitative_comparison}), \eg forking, censorship, etc., enabled by bribing focused on Bitcoin~\cite{DBLP:conf/aft/AvarikiotiKLM24,bonneau2016buy,bonneau2018hostile,hu2023novel}, \ie Nakamoto consensus. Expressing the bribery contract logic on Bitcoin is cumbersome, thus, follow-up works have moved the bribery contract to a different, Turing-complete chain, \eg Ethereum, largely limiting these bribery attacks' practicality~\cite{judmayer2019pay,judmayer2021sok}. We are interested in bribery contracts and the implied incentives, where both the briber and the bribee act on the same chain~\cite{karakostas2024blockchain}. McCorry et al. designed bribery contracts on Ethereum PoW for the first time~\cite{mccorry2018smart}. Since then, Ethereum has transitioned to a PoS consensus. Thus, we continue their seminal work, but in the setting of Ethereum PoS. Many works analyze the negative externalities of the technically easiest form of bribery: censorship~\cite{DBLP:conf/sigecom/BergerFMS25,wahrstatter2024blockchain,wang2023blockchain}. Closest to our work is that of~\cite{DBLP:conf/eurosp/SarencheTMSP25}, in which bribery attacks on Ethereum PoS had been suggested; here, the bribee is only provided with game-theoretic guarantees (not cryptographic) that the briber will compensate her. To overcome these limitations, we design, implement, and evaluate novel, \emph{trustless} bribery contracts that are not verifiable in PoW ($\mathsf{PayToExit}$), and were not efficient prior to recent protocol upgrades ($\mathsf{PayToAttest}$),~\cf~\Cref{sec:pos_ethereum_prelims}.
}
\section{Conclusion and Future Directions}\label{sec:conclusion}

In this work, we demonstrated that the features enabling scalable, permissionless blockchains -- namely expressive smart contracts and efficient consensus mechanisms -- can be repurposed to create novel attack vectors. We designed, implemented, and evaluated three trustless bribery contracts ($\mathsf{PayToAttest}$, $\mathsf{PayToExit}$, and $\mathsf{PayToBias}$) that allow an adversary to undermine core consensus properties in Ethereum. Our initial game-theoretic analysis of the $\mathsf{PayToExit}$ market quantifies the practical incentives for validators to participate in such attacks.

Our findings open several promising avenues for future research:
\begin{description}
    \item[Anonymity and Privacy] The details of our bribery contracts (\eg participants, validator indices, and bribe amounts) are public. These could be concealed using confidential transaction schemes~\cite{bunz2020zether} and zero-knowledge proofs to increase the stealth of such attacks.
    \item[Additional Bribery Contracts] The interface we developed,~\cf~\Cref{fig:bribe_interface}, can be used to implement other known bribery attacks, such as a $\mathsf{PayToInclude}$ contract that pays proposers to include or censor specific transactions~\cite{DBLP:conf/sigecom/BergerFMS25,wahrstatter2024blockchain,wang2023blockchain}.
    \item[Game-Theoretic Extensions] Our analysis of the $\mathsf{PayToExit}$ market serves only as a starting point. Promising extensions include modeling temporal dynamics, variable bribes, and heterogeneous validators. Furthermore, formal game models for the $\mathsf{PayToAttest}$ and $\mathsf{PayToBias}$ attacks are needed to fully characterize the related economic incentives.
    \item[Attack Robustness vs. Inflation] While increasing protocol rewards through inflation could potentially deter bribery, the relationship between a cryptocurrency's monetary policy and its resilience to such economic attacks is not yet well understood and warrants further study.
\end{description}

\ifanonymous
\else
\paragraph{Acknowledgements.}
We are grateful to Kaya Alptürer for insightful discussions and encouragement. We are thankful to the Financial Cryptography and Data Security 2026 reviewers for their constructive feedback that helped improve this manuscript. István András Seres was supported by the Ministry of Culture and Innovation and the National Research, Development, and Innovation Office within the Quantum Information National Laboratory of Hungary (Grant No. 2022-2.1.1-NL-2022-00004).
Balazs Pejo was supported by the European Union project RRF-2.3.1-21-2022-00004 within the framework of the Artificial Intelligence National Laboratory.
Gergely Biczók was supported by Project no. 138903 of the Ministry of Innovation and Technology, Hungary, from the NRDI
Fund, financed under the FK\_21 funding scheme.
\fi

\bibliography{sample}
\bibliographystyle{plain}

\appendix
\section{Additional Preliminaries}\label{sec:additional_preliminaries}

\subsection{Additional Notations}\label{sec:additional_notations}

The function $\mathbf{blockhash}(n):\mathbb{N}\rightarrow\{0,1\}^{256}$ takes a block height $n$ as input and returns the corresponding block hash.

Ethereum extensively uses Merkle trees~\cite{merkle1987digital} as vector commitments or cryptographic accumulators. For instance, validator public keys are committed to in an incremental Merkle tree whose root hash is accessible on-chain to any smart contract. We denote the relevant Merkle tree functionalities as follows.

\begin{description}
    \item[$\mathsf{Merkle.Insert}(T,x)\rightarrow T'$.] Adds element $x$ as a new leaf node to the Merkle tree $T$ and returns an updated Merkle tree $T'$.
    \item[$\mathsf{Merkle.Prove}(T,x,i)\rightarrow\pi$.] Generates a proof $\pi$ proving the membership of element $x$ at the index $i$ in Merkle tree $T$.
    \item[$\mathsf{Merkle.Verify}(\mathsf{root},x,i,\pi)\rightarrow\{0,1\}$.] Given a root hash $\mathsf{root}$ of the Merkle tree, this algorithm verifies the proof $\pi$ for the statement that $x$ is committed in the $i$th position in the Merkle tree.
\end{description}

\subsection{On-chain BLS batch verification}\label{sec:bls_batch_on_chain}

Our $\mathsf{PayToAttest}$ contract requires the on-chain batch verification of BLS signatures from validator attestations. The practical implementation of this contract is enabled by EIP-7549, the importance of which we now empirically demonstrate. EIP-7549 ensures that when validators attest to the same block header, they sign the exact same message. This allows for constant-time ($\mathcal{O}(1)$) BLS batch verification, independent of the number of signers, as shown in~\Cref{fig:bls_batch_verification_measurements} and~\Cref{eq:bls_same_message_check}. In contrast, before this EIP, the linear-cost verification would have made our $\mathsf{PayToAttest}$ contract prohibitively expensive for bribing a large number of validators.

\begin{figure}[ht]
    \centering
    \subfloat[BLS batch verification gas costs. At the time of writing, the Ethereum block gas limit is set to $\num{45087931}$ gas.]{{
        \resizebox{0.45\textwidth}{!}{\input{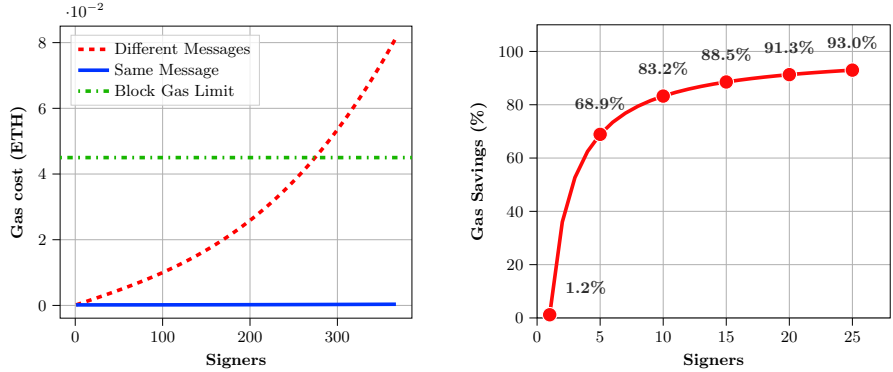}}
    }}%
    \hspace{3mm}
    \subfloat[BLS batch verification gas savings when the same message $m$ vs. different $\{m_i\}_i$ messages are signed.]{{
        \resizebox{0.47\textwidth}{!}{% This file was created with matplot2tikz v0.4.0.
\begin{tikzpicture}

\definecolor{darkgray176}{RGB}{176,176,176}
\definecolor{darkslategray46}{RGB}{46,46,46}
\definecolor{red2551010}{RGB}{255,10,10}

\begin{axis}[
tick align=outside,
tick pos=left,
x grid style={darkgray176},
xlabel={$ \displaystyle \textbf{Signers}$},
xmajorgrids,
xmin=0, xmax=28,
xtick style={color=black},
y grid style={darkgray176},
ylabel={$ \displaystyle \textbf{Gas Savings (\%)}$},
ymajorgrids,
ymin=0, ymax=110,
ytick style={color=black}
]
\addplot [line width=2pt, red2551010]
table {%
1 1.19904515955726
2 36.0043022317827
3 52.6796365167357
4 62.4618495876509
5 68.8932232866051
6 73.4439588992559
7 76.8328592604463
8 79.4562117506166
9 81.5455925181509
10 83.2499224485575
11 84.6653549455011
12 85.8615996649916
13 86.8845389235414
14 87.7699513163592
15 88.543211969652
16 89.2248342118687
17 89.8295333475389
18 90.3700777312719
19 90.8567699697258
20 91.2971733493346
21 91.6958173272957
22 92.0605930382467
23 92.3953023165774
24 92.701751364385
25 92.9859525143276
};
\addplot [semithick, red2551010, mark=*, mark size=4, mark options={solid,draw=white}, only marks]
table {%
1 1.19904515955726
};
\addplot [semithick, red2551010, mark=*, mark size=4, mark options={solid,draw=white}, only marks]
table {%
5 68.8932232866051
};
\addplot [semithick, red2551010, mark=*, mark size=4, mark options={solid,draw=white}, only marks]
table {%
10 83.2499224485575
};
\addplot [semithick, red2551010, mark=*, mark size=4, mark options={solid,draw=white}, only marks]
table {%
15 88.543211969652
};
\addplot [semithick, red2551010, mark=*, mark size=4, mark options={solid,draw=white}, only marks]
table {%
20 91.2971733493346
};
\addplot [semithick, red2551010, mark=*, mark size=4, mark options={solid,draw=white}, only marks]
table {%
25 92.9859525143276
};
\draw (axis cs:1,1.19904515955726) ++(20pt,12pt) node[
  scale=1,
  anchor=base,
  text=darkslategray46,
  rotate=0.0
]{\bfseries 1.2\%};
\draw (axis cs:5,68.8932232866051) ++(0pt,14pt) node[
  scale=1,
  anchor=base,
  text=darkslategray46,
  rotate=0.0
]{\bfseries 68.9\%};
\draw (axis cs:10,83.2499224485575) ++(0pt,12pt) node[
  scale=1,
  anchor=base,
  text=darkslategray46,
  rotate=0.0
]{\bfseries 83.2\%};
\draw (axis cs:15,88.543211969652) ++(0pt,12pt) node[
  scale=1,
  anchor=base,
  text=darkslategray46,
  rotate=0.0
]{\bfseries 88.5\%};
\draw (axis cs:20,91.2971733493346) ++(0pt,12pt) node[
  scale=1,
  anchor=base,
  text=darkslategray46,
  rotate=0.0
]{\bfseries 91.3\%};
\draw (axis cs:25,92.9859525143276) ++(0pt,12pt) node[
  scale=1,
  anchor=base,
  text=darkslategray46,
  rotate=0.0
]{\bfseries 93.0\%};
\end{axis}
\end{tikzpicture}}
    }}%
    \caption{BLS batch verification measurements on Ethereum after EIP-2537~\cite{vlasov2020eip}. On-chain  BLS batch verification gas cost (left) when the same message (blue line) or different messages (red line) are signed by multiple signers. If different messages are signed, then only fewer than $275$ messages can be batch-verified in a single Ethereum block. Gas savings (right) are already significant for a handful of signed messages (if the same message is signed).}%
    \label{fig:bls_batch_verification_measurements}%
\end{figure}

\subsection{Attestation processing and rewards}\label{sec:attestation_rewards}

During each epoch, a validator is expected to broadcast an attestation to its assigned subnet. As detailed in~\Cref{fig:class_structures}, an attestation represents a validator's view of the chain and contains several components. The $\mathsf{beacon\_block\_root}$ is the validator's vote for the head of the chain under the LMD GHOST fork-choice rule. Concurrently, the $\mathsf{target}$ and $\mathsf{source}$ fields are votes for checkpoints under the Casper FFG finality gadget, which is responsible for finalizing the blockchain.

\begin{table*}[ht]
\centering
\begin{minipage}{\textwidth}
\begin{center}
 \scalebox{1}{
 \begin{tabular}{l c c c c} 
 \toprule
   \textbf{Timeliness}&\textbf{1 slot} & \textbf{$\leq$ 5 slots} & \textbf{$\leq$ 32 slots} & \thead{\textbf{$>$ 32 slots} \\ \textbf{(missing)}} \\ [0.5ex] 
\midrule
\makecell[l]{Wrong source} & 0 & 0 & 0 & 0 \\ Correct source & $W_s$ & $W_s$ & 0 & 0 \\ 
\makecell[l]{Correct source,\\ and target} & $W_s + W_t$ & $W_s + W_t$ & $W_t$ & 0 \\ 
\makecell[l]{Correct source, \\ target and head} & \makecell{$W_s + $ \\ $W_t + W_h$} & $W_s + W_t$ & $W_t$ & 0 \\[1ex] 
\bottomrule
\end{tabular}
}
\caption{Attestation reward matrix for validators, weights, and proportion of a validator's reward for timely votes are as follows: source ($W_s=21.9\% $), target ($W_t=40.6\%$), and head votes ($W_h=21.9\%$), respectively. The timeliness columns indicate when the attestation was included in the canonical chain after the validator needed to broadcast its attestation.}
\label{tab:attestation_rewards}
\end{center}
\end{minipage}
\end{table*}

As~\Cref{tab:attestation_rewards} shows, a validator can earn a partial reward even if their attestation does not perfectly match the canonical chain. The primary condition for receiving any reward is that the attestation itself must be included in a canonical block~\cite{DBLP:conf/eurosp/SarencheTMSP25}. During a fork, a validator risks forfeiting rewards if their attestation supports the losing (i.e., non-canonical) chain. However, during a short fork (fewer than $5$ slots), even if a validator's head vote is incorrect, the validators can still earn a significant partial reward for correct source and target votes. This results in a reward with weight $W_s+W_t$, corresponding to $74\%$ of the maximum possible reward. %\abel{pls reviewzd a változtatásomat. Azért $74\%$, mert attestationnel amúgy sem kapnál többet mint $W_s+W_t+W_h<1$  és $74\%=\frac{W_s+W_t}{W_s+W_t+W_h}$}\istvan{LFG}
\section{Bribery Contract Implementations}\label{sec:contract_implementation}

This section presents the pseudocode for our three proposed bribery contracts. The code focuses on the core logic and omits certain low-level, EVM-specific implementation details for the sake of clarity. For the complete Solidity smart contracts and the accompanying test suite, readers are referred to our publicly available open-source repository.~\footnote{\ifanonymous
~\url{https://anonymous.4open.science/r/bribery-zoo-70FD/}.
\else
~\url{https://github.com/0xSooki/bribery-zoo}.
\fi}

\subsection{The $\mathsf{PayToAttest}$ Bribery Contract}\label{sec:paytoattest_implementation}

In a $\mathsf{PayToAttest}$ contract, a briber purchases attestations for one or more of their proposed blocks. The process begins when the briber calls the \texttt{offerBribe}$(\pk^{*}, m, t)$ function, specifying the aggregate public key $\pk^{*}$ of the target validators, the block header $m$, and a submission deadline $t$. As part of this initial transaction, the briber also transfers the total $\mathsf{bribeAmnt}$ in Ether to the contract. The set of validators corresponding to $\pk^{*}$ can then claim their reward by providing the contract with a valid, aggregate BLS signature on $m$ before the deadline expires, as detailed in \Cref{alg:paytoattest}.

\begin{algorithm}[H]
\caption{The $\mathsf{PayToAttest}$ contract: The pseudocode of the $\texttt{offerBribe}()$ and $\texttt{takeBribe}()$ functions as called by the briber and bribee, respectively.}
\label{alg:paytoattest}
\begin{algorithmic}[1]
\STATE \textbf{function} \texttt{offerBribe}($\pk^{*} \in \mathbb{G}_1,m \in \mathsf{AttestationData},t \in \mathbb{N}$):
  \STATE\hspace{\algorithmicindent} $\pk_{agg}=\pk^{*}$ //Aggregate public key of the bribed validators $\{\pk_i\}^{n}_{i=1}$.
  \STATE\hspace{\algorithmicindent} $\mathsf{block.header}=m$ // This is the $\mathsf{AttestationData}$,~\cf~\Cref{fig:class_structures}.
  \STATE\hspace{\algorithmicindent} $\mathsf{deadline}=t$ //The briber only rewards timely votes.
  \STATE\hspace{\algorithmicindent} Transfer $\mathsf{bribeAmnt}$ to the contract.
\STATE
\STATE \textbf{function} \texttt{takeBribe}($\sigma \in \mathbb{G}_2$):
\begin{ALC@g}
    \STATE \textbf{assert} $!\mathsf{claimed}[\sigma]$
    \STATE \textbf{assert} $\mathsf{block.timestamp} < \mathsf{deadline}$.
    \STATE \textbf{assert} $\bls.\verifyy(\pk_{agg},m,\sigma,)$ //This is implemented applying the same-message BLS batch verification check,~\cf~\Cref{eq:bls_same_message_check}.
    \STATE $\mathsf{claimed}[\sigma]=\mathsf{true}$
\STATE Transfer $\mathsf{bribeAmnt}$ ether to validator.
\end{ALC@g}
\end{algorithmic}
\end{algorithm}

The $\mathsf{PayToAttest}$ contract can be extended for greater flexibility by removing the block header $m$ as an argument from the \texttt{offerBribe()} function, resulting in the signature \texttt{offerBribe}$(\pk^{*}, t)$. This change allows the briber to decide on the target block header after the bribe has been offered, but it necessitates an additional verification step within the \texttt{takeBribe}($\sigma$) function. To prevent misuse, the contract must confirm that the attestation provided by the bribee is for a block actually proposed by the briber. This can be achieved by checking the block's randomness field -- which contains the proposer's BLS signature on the epoch number (\cf~\Cref{sec:RANDAO_bribery_market}) -- to ensure it was signed with the briber's public key.

\subsection{The $\mathsf{PayToExit}$ Bribery Contract}\label{sec:paytoexit_implementation}

The $\mathsf{PayToExit}$ contract creates a trustless bribery market to incentivize Ethereum validators to voluntarily exit the active validator set (\cf~\Cref{sec:pay_to_exit}). This contract rewards a validator with $\mathsf{bribeAmnt}$ in Ether if they can prove that the following three conditions have been met for their validator index $i \in \mathbb{N}$:

\begin{description}
    \item[Membership in the current validator set] The validator must prove they are a current member of the active validator set. This is achieved by providing a Merkle proof $\pi$ showing that their public key, $\pk_i$, is a leaf in the deposit contract's Merkle tree. The contract verifies this proof against the publicly accessible deposit root hash.
    \item[A signed exit transaction] The validator must provide a valid BLS signature on a voluntary exit message (\cf~\Cref{fig:class_structures}). The $\mathsf{PayToExit}$ contract verifies this signature to confirm the validator's authentic intent to exit the protocol.
    \item[Non-membership in the rewarded validator set] To prevent duplicate payments, the contract must verify that the validator with index $i$ has not already been rewarded for exiting. This is implemented by checking for non-membership in the $\mathsf{claimed}[i]$ mapping, which tracks all validators who have already received a payout.
\end{description}

The logic for verifying these three conditions is implemented in the $\mathsf{takeBribe}()$ function, as detailed in \Cref{alg:paytoexit}. If the conditions are satisfied, the contract rewards the validator by distributing the funds accordingly. 

The contract's logic could be extended with additional constraints. For instance, a briber might add a time-based check to only reward voluntary exits that occurred within a specific timeframe. While our current implementation does not include this feature, it represents a straightforward extension.

\begin{algorithm}[H]
\caption{The $\mathsf{PayToExit}$ contract: The pseudocode of the $\texttt{offerBribe}()$ and $\texttt{takeBribe}()$ functions as called by the briber and bribee, respectively.}
\label{alg:paytoexit}
\begin{algorithmic}[1]
\STATE \textbf{function} offerBribe()
  \STATE\hspace{\algorithmicindent} Transfer $\mathsf{bribeCost}$ to the contract.
\STATE
\STATE \textbf{function} takeBribe($i \in \mathbb{N}$, $\sigma \in \mathbb{G}_2$, $\pi \in \mathsf{bytes[]}$)
\begin{ALC@g}
\STATE \textbf{assert} $\mathsf{Merkle.Verify}(\mathsf{root},\pk_i, \ i, \ \pi)$ //$\mathsf{root}$ is read from the deposit contract.
\STATE \textbf{assert} $\bls.\verifyy(\pk_i, m_{\mathsf{exit,i}},\sigma)$ // The message $m_{\mathsf{exit,i}}$ is the \texttt{VoluntaryExit} object for validator $i$,~\cf~\Cref{fig:class_structures}.
\STATE \textbf{assert} $!\mathsf{claimed}[i]$ // $i$ is the validator's deposit index in the deposit contract.
\STATE $\mathsf{claimed}[i]=\mathsf{true}$.
\STATE Transfer $\mathsf{bribeAmnt}$ ether to validator $i$.
\end{ALC@g}
\end{algorithmic}
\end{algorithm}

While our proof-of-concept implementation of the $\mathsf{PayToExit}$ contract uses a fixed $\mathsf{bribeAmnt}$, the design can be easily extended to support a dynamic bribe amount. A straightforward extension would be to make the $\mathsf{bribeAmnt}$ a function of $N(t)$ -- the total number of active validators -- allowing the reward to scale with network participation.

\subsection{The $\mathsf{PayToBias}$ Bribery Contract}\label{sec:paytobias_implementation}

Ethereum uses the distributed randomness beacon RANDAO to select block proposers. The beacon's randomness is known to be biasable by rational validators who can strategically reveal or withhold their randomness contribution~\cite{alpturer2024optimal,nagy2025forking,wahrstatter2023time}. This creates an economic incentive for manipulation, as being selected more often leads to greater rewards.

Our $\mathsf{PayToBias}$ contract creates an efficient market to auction off this manipulation right. For simplicity, our model considers an adversary who controls the final slot of an epoch, effectively auctioning a single bit of bias. Other rational validators can then bid on whether the adversary should publish or withhold their block in that slot. Once the auction concludes, the contract pays the adversary the total amount bid on the winning outcome.

To collect this reward, the adversary must prove to the $\mathsf{PayToBias}$ contract which action they took. This proof must satisfy the following three conditions:

\begin{description}
    \item[Block integrity] The hashes of the provided blocks are verified against the built-in \texttt{blockhash} function in Solidity, ensuring that each block corresponds to its designated block height and that the block contents are authentic.    
    \item[Block linkage] The given two headers must be correspond to consecutive blocks, thus the second block must reference the hash of the first block.
    \item[Time difference] To confirm a block was withheld, the contract must verify that no block was published in the designated final slot of the epoch. This is accomplished by analyzing the timestamps of the surrounding blocks, which are enforced by Ethereum's consensus rules: the contract checks if the timestamp of the first block in the next epoch is more than 12 seconds greater than the timestamp of the penultimate block of the preceding epoch. Since each slot is 12 seconds long, this gap proves that a slot was skipped, confirming the adversary withheld their block as promised.
\end{description}

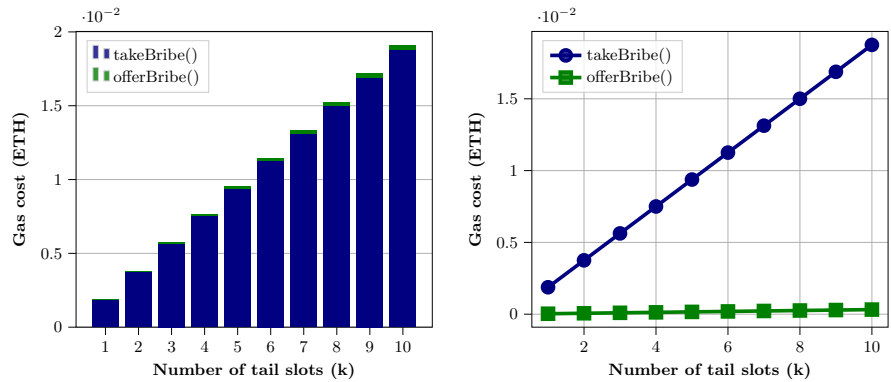
\begin{figure}[ht]
    \centering
    \subfloat[A stacked chart illustrating the linear gas cost in the number of auctioned tail slots in the $\mathsf{takeBribe}(\cdot)$ function.]{{
        \resizebox{0.47\textwidth}{!}{% This file was created with matplot2tikz v0.4.0.
\begin{tikzpicture}

\definecolor{darkgray176}{RGB}{176,176,176}
\definecolor{green}{RGB}{0,128,0}
\definecolor{lightgray204}{RGB}{204,204,204}
\definecolor{navy}{RGB}{0,0,128}

\begin{axis}[
legend cell align={left},
legend style={
  fill opacity=0.8,
  draw opacity=1,
  text opacity=1,
  at={(0.03,0.97)},
  anchor=north west,
  draw=lightgray204
},
tick align=outside,
tick pos=left,
x grid style={darkgray176},
xlabel={\(\displaystyle \textbf{Number of tail slots (k)} \)},
xmin=-0.89, xmax=9.89,
xtick style={color=black},
xtick={0,1,2,3,4,5,6,7,8,9},
xticklabels={1,2,3,4,5,6,7,8,9,10},
y grid style={darkgray176},
ylabel={\(\displaystyle \textbf{Gas cost (ETH)} \)},
ymajorgrids,
ymin=0, ymax=0.02003033487,
ytick style={color=black}
]
\draw[draw=none,fill=navy] (axis cs:-0.4,0) rectangle (axis cs:0.4,0.0018755758);
\addlegendimage{ybar,ybar legend,draw=none,fill=navy}
\addlegendentry{takeBribe()}

\draw[draw=none,fill=navy] (axis cs:0.6,0) rectangle (axis cs:1.4,0.0037511516);
\draw[draw=none,fill=navy] (axis cs:1.6,0) rectangle (axis cs:2.4,0.0056267274);
\draw[draw=none,fill=navy] (axis cs:2.6,0) rectangle (axis cs:3.4,0.0075023032);
\draw[draw=none,fill=navy] (axis cs:3.6,0) rectangle (axis cs:4.4,0.009377879);
\draw[draw=none,fill=navy] (axis cs:4.6,0) rectangle (axis cs:5.4,0.0112534548);
\draw[draw=none,fill=navy] (axis cs:5.6,0) rectangle (axis cs:6.4,0.0131290306);
\draw[draw=none,fill=navy] (axis cs:6.6,0) rectangle (axis cs:7.4,0.0150046064);
\draw[draw=none,fill=navy] (axis cs:7.6,0) rectangle (axis cs:8.4,0.0168801822);
\draw[draw=none,fill=navy] (axis cs:8.6,0) rectangle (axis cs:9.4,0.018755758);
\draw[draw=none,fill=green] (axis cs:-0.4,0.0018755758) rectangle (axis cs:0.4,0.00190765094);
\addlegendimage{ybar,ybar legend,draw=none,fill=green}
\addlegendentry{offerBribe()}

\draw[draw=none,fill=green] (axis cs:0.6,0.0037511516) rectangle (axis cs:1.4,0.00381530188);
\draw[draw=none,fill=green] (axis cs:1.6,0.0056267274) rectangle (axis cs:2.4,0.00572295282);
\draw[draw=none,fill=green] (axis cs:2.6,0.0075023032) rectangle (axis cs:3.4,0.00763060376);
\draw[draw=none,fill=green] (axis cs:3.6,0.009377879) rectangle (axis cs:4.4,0.0095382547);
\draw[draw=none,fill=green] (axis cs:4.6,0.0112534548) rectangle (axis cs:5.4,0.01144590564);
\draw[draw=none,fill=green] (axis cs:5.6,0.0131290306) rectangle (axis cs:6.4,0.01335355658);
\draw[draw=none,fill=green] (axis cs:6.6,0.0150046064) rectangle (axis cs:7.4,0.01526120752);
\draw[draw=none,fill=green] (axis cs:7.6,0.0168801822) rectangle (axis cs:8.4,0.01716885846);
\draw[draw=none,fill=green] (axis cs:8.6,0.018755758) rectangle (axis cs:9.4,0.0190765094);
\end{axis}

\end{tikzpicture}}
}\label{}}% 
\
    \subfloat[Gas measurements for the $\mathsf{PayToBias}$ contracts $\mathsf{offerBribe}()$
and $\mathsf{takeBribe}()$ functions respectively.]{{
        \resizebox{0.47\textwidth}{!}{% This file was created with matplot2tikz v0.4.0.
\begin{tikzpicture}

\definecolor{darkgray176}{RGB}{176,176,176}
\definecolor{green}{RGB}{0,128,0}
\definecolor{lightgray204}{RGB}{204,204,204}
\definecolor{navy}{RGB}{0,0,128}

\begin{axis}[
legend cell align={left},
legend style={
  fill opacity=0.8,
  draw opacity=1,
  text opacity=1,
  at={(0.03,0.97)},
  anchor=north west,
  draw=lightgray204
},
tick align=outside,
tick pos=left,
x grid style={darkgray176},
xlabel={\(\displaystyle \textbf{Number of tail slots (k)} \)},
xmajorgrids,
xmin=0.55, xmax=10.45,
xtick style={color=black},
y grid style={darkgray176},
ylabel={\(\displaystyle \textbf{Gas cost (ETH)} \)},
ymajorgrids,
ymin=-0.000904109003, ymax=0.019691942143,
ytick style={color=black}
]
\addplot [line width=2pt, navy, mark=*, mark size=3, mark options={solid}]
table {%
1 0.0018755758
2 0.0037511516
3 0.0056267274
4 0.0075023032
5 0.009377879
6 0.0112534548
7 0.0131290306
8 0.0150046064
9 0.0168801822
10 0.018755758
};
\addlegendentry{takeBribe()}
\addplot [line width=2pt, green, mark=square*, mark size=3, mark options={solid}]
table {%
1 3.207514e-05
2 6.415028e-05
3 9.622542e-05
4 0.00012830056
5 0.0001603757
6 0.00019245084
7 0.00022452598
8 0.00025660112
9 0.00028867626
10 0.0003207514
};
\addlegendentry{offerBribe()}
\end{axis}

\end{tikzpicture}\label{fig:pay_to_exit_attack_duration}}
    }}%
    \caption{$\mathsf{PayToBias}$ gas measurements}%
    \label{fig:pay_to_bias_gas_measurements}
\end{figure}

\Cref{alg:paytobias} presents the pseudocode for the $\mathsf{PayToBias}$ contract, which establishes a trustless auction market through three core functions. The $\mathsf{offerBribe}()$ function allows a manipulator to initiate an auction for their slot; $\mathsf{bid}()$ enables other parties to participate; and $\mathsf{takeBribe()}$ lets the manipulator claim the reward after verifying the required conditions. A key feature of this design is the on-chain auction itself, which efficiently prices the right to bias the RANDAO output. This auction model is a generalizable pattern that could be applied to our other attacks, enabling the creation of efficient bribery markets without needing to deploy multiple contracts.

\begin{algorithm}[H]
\caption{The $\mathsf{PayToBias}$ contract: The pseudocode of the $\texttt{offerBribe}()$ and $\texttt{takeBribe}()$ functions as called by the briber and bribee, respectively.}
\label{alg:paytobias}
\begin{algorithmic}[1]
\STATE \textbf{function} offerBribe($\pk \in \mathbb{G}_1,e \in \mathbb{N},\sigma \in \mathbb{G}_2$)
  \STATE\hspace{\algorithmicindent} \textbf{assert} $\mathsf{BLS.Verify}(\pk,e,\sigma)$ // The beacon manipulator releases prematurely its randomness contribution $\sigma$. More precisely, $H(\sigma)$ is the randomness contribution.
  \STATE\hspace{\algorithmicindent}

\STATE \textbf{function} bid($withhold \in \mathbb{B}, bidAmnt \in \mathbb{N}, e \in \mathbb{N}$)
  \STATE\hspace{\algorithmicindent} $auction \gets auctions[e]$ // Get bribery auction corresponding to epoch $e$
  \STATE\hspace{\algorithmicindent} $auction[withhold].bid += bidAmnt$
  \STATE\hspace{\algorithmicindent}

\STATE \textbf{function} takeBribe($e \in \mathbb{N}$, \ $\textit{B}_1, \textit{B}_2$ $\in$ \texttt{BlockHeader})
\STATE $auction \gets auction[e]$
\IF{$block.timestamp < auction.deadline$}
    \STATE \textbf{assert} H(\textit{B}$_1$) = \textbf{blockhash}($n$) and H(\textit{B}$_2$) = \textbf{blockhash}($n$+1)
    \STATE \textbf{assert} $\textit{B}_2\text{.}parent$ = H(\textit{B}$_1$)
    \IF{$\textit{B}_2.timestamp-\textit{B}_1.timestamp > 12$}
        \STATE Transfer $\mathsf{bribeAmnt}$ ether to the winner of the auction
    \ENDIF
\ELSE
    \STATE Transfer $\mathsf{bribeAmnt}$ ether to the winner of the auction
\ENDIF
\end{algorithmic}
\end{algorithm}

\section{$\mathsf{PayToAttest}$ sequence diagrams}\label{sec:pay_to_attest_incentives}

We provide two sequence diagrams illustrating the mechanics of \textit{ex-ante} (\cf~\Cref{fig:pay_to_attest_ex_ante_timeline}) and \textit{ex-post} (\cf~\Cref{fig:pay_to_attest_ex_post_timeline}) reorganizations that use a $\mathsf{PayToAttest}$ contract to buy consensus votes. For clarity, both diagrams depict the simplest forking scenario to which our attacks apply.

\begin{figure}
\begin{tikzpicture}[
  lifeline/.style={thick},
  event/.style={
    font=\small,
    draw,
    dashed,
    fill=white,
    inner sep=2pt,
    rectangle
  },
  msg/.style={-{Stealth[length=2mm]},thick},
  msgshift/.style={midway,above=4pt,font=\scriptsize}, % label shifted up
  msgnormal/.style={midway,above,font=\scriptsize},     % normal label
  phase/.style={
    font=\scriptsize,
    anchor=east,
    rotate=90
  },
  player/.style={
    font=\large % bigger player names
  }
]
% Parameters
\def\nslots{3}
\def\slotheight{3.0} % vertical spacing between horizontal lines
\def\playersep{1.6cm}

% --- Player positions ---
\node[player] (p1) {\textcolor{bribeegreen}{\textbf{B}ribee}};
\node[player,right=\playersep of p1] (p2) {\textcolor{adversaryred}{\textbf{A}dversary}};
\node[player,right=\playersep of p2] (p3) {\textcolor{honestblue}{\textbf{H}onest}};
\node[player,right=\playersep-1.5cm of p3,align=center] (p4) {$\mathsf{PayToAttest}$\\contract};

% --- Lifelines ---
\draw[lifeline] (p1) -- ++(0,-\nslots*\slotheight);
\draw[lifeline] (p2) -- ++(0,-\nslots*\slotheight);
\draw[lifeline] (p3) -- ++(0,-\nslots*\slotheight);
\draw[lifeline] (p4) -- ++(0,-\nslots*\slotheight);

% --- Slot labels (no top dashed line) ---
\foreach \i/\label/\tag in {
  0/{Slot $n+1$}/{$\advblock{}{}$},
  1/{Slot $n+2$}/{$\honestblock{}{}$},
  2/{Slot $n+3$}/{$\advblock{}{}$}
} {
  \pgfmathsetmacro{\y}{-(\i+0.25)*\slotheight}
  % main slot label
  \node[phase] at ([xshift=-0.95cm]p1.west |- 0,\y) {\label};
  % extra "POC" tag, shifted left and also DOWN by 0.5cm
  \node[phase] at ([xshift=-1.2cm,yshift=-0.4cm]p1.west |- 0,\y) {\tag};
}

% --- Dashed horizontal lines (start after first slot label) ---
\foreach \i in {1,...,\nslots} {
  \pgfmathsetmacro{\y}{-\i*\slotheight}
  \draw[dashed] ([xshift=-0.1cm]p1.west |- 0,\y) -- ([xshift=-0.4cm]p4.east |- 0,\y);
}

% --- Slot content ---

% Slot 1
\pgfmathsetmacro{\ycenter}{-\slotheight/2}
\pgfmathsetmacro{\ytop}{-\slotheight*1/2}
\pgfmathsetmacro{\ybottom}{-\slotheight*1}
\pgfmathsetmacro{\gap}{(\ybottom-\ytop)/3}

\node[event] at (p2 |- 0,\ytop-\gap*1.5) {Build block};
\draw[msg] (p2 |- 0,\ytop-0.4*\gap) -- (p1 |- 0,\ytop-0.4*\gap)
  node[pos=0.5,above,font=\scriptsize]{Send $H(m)$};
\draw[msg] (p2 |- 0,\ytop+0.7*\gap) -- (p1 |- 0,\ytop+0.7*\gap)
  node[pos=0.5,above,font=\scriptsize]{Send signed $\mathsf{offerBribe}$ tx};

\node[event] at (p1 |- 0,\ytop+1.8*\gap) {No vote};
\node[event] at (p2 |- 0,\ytop+1.8*\gap) {Vote for Slot $n+1$};
\node[event] at (p3 |- 0,\ytop+1.8*\gap) {Vote for Slot $n$};

% Slot 2
\pgfmathsetmacro{\ytop}{-\slotheight*3/2}
\pgfmathsetmacro{\ybottom}{-\slotheight*2}

\node[event] at (p3 |- 0,\ytop-2.0*\gap) {Propose block};

\node[event] at (p1 |- 0,\ytop-0.5*\gap) {No vote};
\node[event] at (p2 |- 0,\ytop-0.5*\gap) {Vote for Slot $n+1$};
\node[event] at (p3 |- 0,\ytop-0.5*\gap) {Vote for Slot $n+2$};

\draw[msg] (p1 |- 0,\ytop+1.0*\gap) -- (p2 |- 0,\ytop+1.0*\gap)
  node[pos=0.5,above,font=\scriptsize]{Send $\takebribe$};

\draw[msg] (p2 |- 0,\ytop+1.9*\gap) -- (p3 |- 0,\ytop+1.9*\gap)
  node[pos=0.5,above,font=\scriptsize]{Send votes of {\textcolor{bribeegreen}{\textbf{B}ribee}}};

% Slot 3
\pgfmathsetmacro{\ycenter}{-(2*\slotheight + \slotheight/2)}
\pgfmathsetmacro{\ytop}{-\slotheight*5/2}
\pgfmathsetmacro{\ybottom}{-\slotheight*3}

\node[event] at (p2 |- 0,\ytop-1.8*\gap) {Propose block};

\draw[msg] (p2 |- 0,\ytop-0.0*\gap) -- (p4 |- 0,\ytop-0.0*\gap)
  node[pos=0.31,above,font=\scriptsize]{Send $\offerbribe$, $\takebribe$};

\draw[msg] (p4 |- 0,\ytop+1.5*\gap) -- (p1 |- 0,\ytop+1.5*\gap)
  node[pos=0.12,above,font=\scriptsize]{Pays reward};

\end{tikzpicture}
\caption{Timeline of \emph{an ex-ante reorg} attack where the attacker leverages a $\mathsf{PayToAttest}$ bribery contract. Time increases from top to bottom. In Slot $n+1$, after the block is built, the briber broadcasts the $\offerbribe()$ transaction to the bribee. The bribee withholds his votes, but broadcasts the signatures necessary for them in the $\takebribe()$ transaction during Slot $n+2$. The adversary sends the aggregated votes to the honest validators. Finally, the adversary finishes the fork by building on the withheld block. The block contains the timely $\takebribe()$ transaction, so the bribee receives the reward.}
\label{fig:pay_to_attest_ex_ante_timeline}
\end{figure}
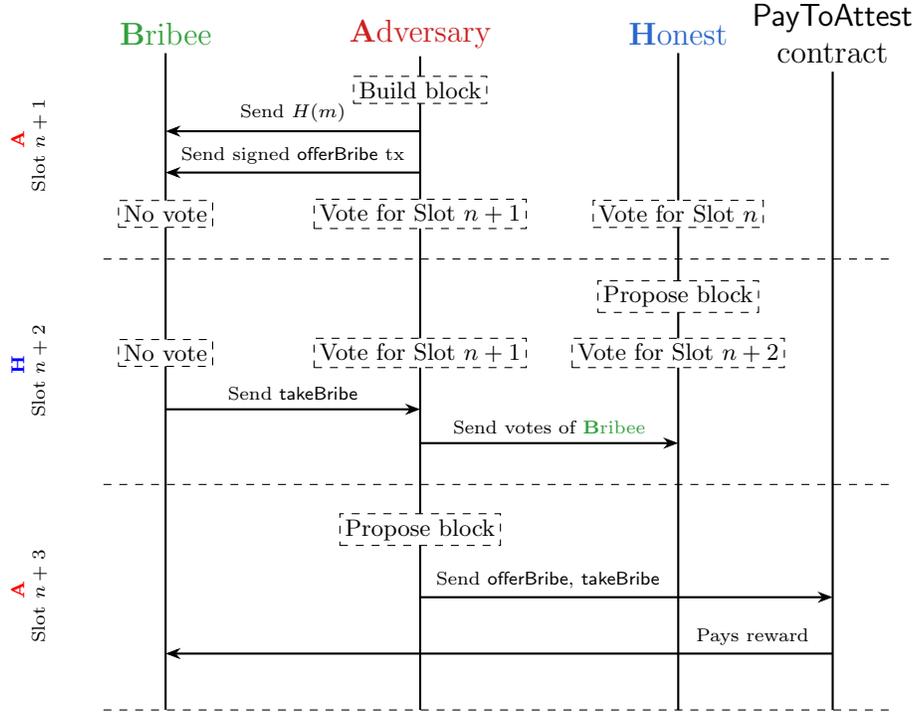

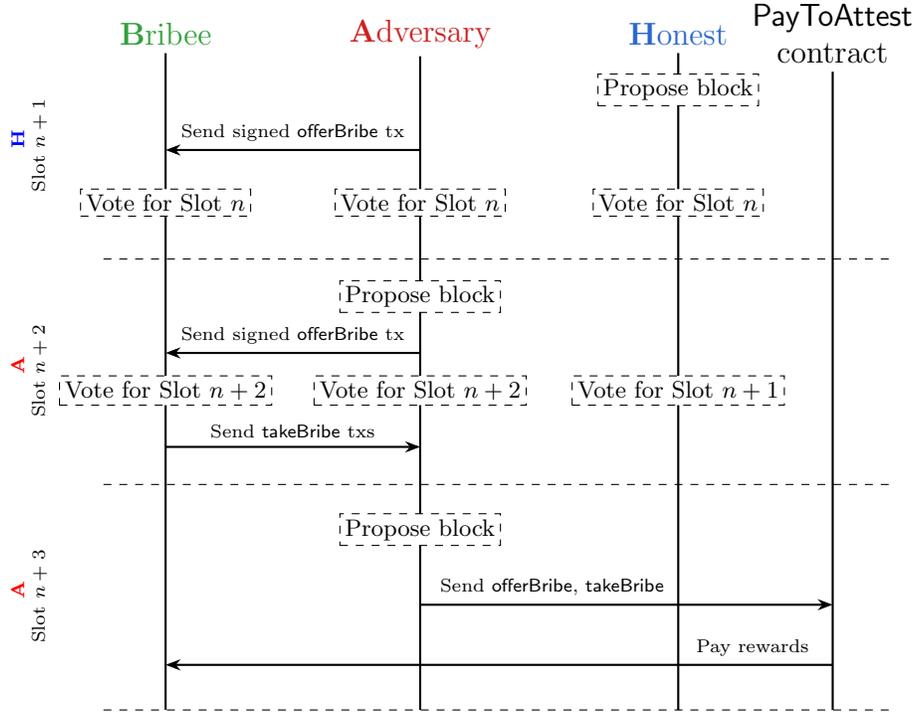
\begin{figure}
\begin{tikzpicture}[
  lifeline/.style={thick},
  event/.style={
    font=\small,
    draw,
    dashed,
    fill=white,
    inner sep=2pt,
    rectangle
  },
  msg/.style={-{Stealth[length=2mm]},thick},
  msgshift/.style={midway,above=4pt,font=\scriptsize}, % label shifted up
  msgnormal/.style={midway,above,font=\scriptsize},     % normal label
  phase/.style={
    font=\scriptsize,
    anchor=east,
    rotate=90
  },
  player/.style={
    font=\large % bigger player names
  }
]
% Parameters
\def\nslots{3}
\def\slotheight{3} % vertical spacing between horizontal lines
\def\playersep{1.6cm}

% --- Player positions ---
\node[player] (p1) {\textcolor{bribeegreen}{\textbf{B}ribee}};
\node[player,right=\playersep of p1] (p2) {\textcolor{adversaryred}{\textbf{A}dversary}};
\node[player,right=\playersep of p2] (p3) {\textcolor{honestblue}{\textbf{H}onest}};
\node[player,right=\playersep-1.5cm of p3,align=center] (p4) {$\mathsf{PayToAttest}$\\contract};

% --- Lifelines ---
\draw[lifeline] (p1) -- ++(0,-\nslots*\slotheight);
\draw[lifeline] (p2) -- ++(0,-\nslots*\slotheight);
\draw[lifeline] (p3) -- ++(0,-\nslots*\slotheight);
\draw[lifeline] (p4) -- ++(0,-\nslots*\slotheight);

% --- Slot labels (no top dashed line) ---
\foreach \i/\label/\tag in {
  0/{Slot $n+1$}/{$\honestblock{}{}$},
  1/{Slot $n+2$}/{$\advblock{}{}$},
  2/{Slot $n+3$}/{$\advblock{}{}$}
} {
  \pgfmathsetmacro{\y}{-(\i+0.25)*\slotheight}
  % main slot label
  \node[phase] at ([xshift=-0.95cm]p1.west |- 0,\y) {\label};
  % extra "POC" tag, shifted left and also DOWN by 0.5cm
  \node[phase] at ([xshift=-1.2cm,yshift=-0.4cm]p1.west |- 0,\y) {\tag};
}
% --- Dashed horizontal lines (start after first slot label) ---
\foreach \i in {1,...,\nslots} {
  \pgfmathsetmacro{\y}{-\i*\slotheight}
  \draw[dashed] ([xshift=-0.1cm]p1.west |- 0,\y) -- ([xshift=-0.4cm]p4.east |- 0,\y);
}

% --- Slot content ---

% Slot 1
\pgfmathsetmacro{\ycenter}{-\slotheight/2}
\pgfmathsetmacro{\ytop}{-\slotheight*1/2}
\pgfmathsetmacro{\ybottom}{-\slotheight*1}
\pgfmathsetmacro{\gap}{(\ybottom-\ytop)/3}

\node[event] at (p3 |- 0,\ytop-\gap*1.5) {Propose block};

\draw[msg] (p2 |- 0,\ytop+0.1*\gap) -- (p1 |- 0,\ytop+0.1*\gap)
  node[pos=0.5,above,font=\scriptsize]{Send signed $\offerbribe$ tx};

\node[event] at (p1 |- 0,\ytop+1.5*\gap) {Vote for Slot $n$};
\node[event] at (p2 |- 0,\ytop+1.5*\gap) {Vote for Slot $n$};
\node[event] at (p3 |- 0,\ytop+1.5*\gap) {Vote for Slot $n$};

% Slot 2
\pgfmathsetmacro{\ytop}{-\slotheight*3/2}
\pgfmathsetmacro{\ybottom}{-\slotheight*2}

\node[event] at (p2 |- 0,\ytop-2.0*\gap) {Propose block};

\draw[msg] (p2 |- 0,\ytop-0.5*\gap) -- (p1 |- 0,\ytop-0.5*\gap)
  node[pos=0.5,above,font=\scriptsize]{Send signed $\offerbribe$ tx};

\node[event] at (p1 |- 0,\ytop+0.5*\gap) {Vote for Slot $n+2$};
\node[event] at (p2 |- 0,\ytop+0.5*\gap) {Vote for Slot $n+2$};
\node[event] at (p3 |- 0,\ytop+0.5*\gap) {Vote for Slot $n+1$};

\draw[msg] (p1 |- 0,\ytop+2.0*\gap) -- (p2 |- 0,\ytop+2.0*\gap)
  node[pos=0.5,above,font=\scriptsize]{Send $\takebribe$ txs};

% Slot 3
\pgfmathsetmacro{\ycenter}{-(2*\slotheight + \slotheight/2)}
\pgfmathsetmacro{\ytop}{-\slotheight*5/2}
\pgfmathsetmacro{\ybottom}{-\slotheight*3}

\node[event] at (p2 |- 0,\ytop-1.8*\gap) {Propose block};

\draw[msg] (p2 |- 0,\ytop+0.2*\gap) -- (p4 |- 0,\ytop+0.2*\gap)
  node[pos=0.32,above,font=\scriptsize]{Send $\offerbribe$, $\takebribe$};

\draw[msg] (p4 |- 0,\ytop+1.8*\gap) -- (p1 |- 0,\ytop+1.8*\gap)
  node[pos=0.12,above,font=\scriptsize]{Pay rewards};

\end{tikzpicture}
\caption{Timeline of \emph{an ex-post reorg} attack where the attacker leverages a $\mathsf{PayToAttest}$ bribery contract. Time increases from top to bottom. In Slot $n+1$, the adversary decided to fork out the new $\honestblock{}{}$, possibly due to its high MEV content. Thus, he offers a bribe to vote for the block in Slot $n$ instead, which the bribee accepts. In the next slot, another $\offerbribe()$ is broadcast to vote for the new $\advblock{}{}$. The bribee collaborates again, making the adversarial branch the heaviest, according to the LMD-Ghost rule. The attacker included the $2$ $\offerbribe()$ and the 2 timely $\takebribe()$ transactions, thus, the reward payments are issued. }
\label{fig:pay_to_attest_ex_post_timeline}
\end{figure}

\begin{landscape}
\begin{table}[h!]
\caption{Qualitative comparison of existing bribery attacks on Ethereum PoS.
A property is marked with \cmark \ if it is achieved and with \xmark \ otherwise, -- is used if a property does not apply. If the symbol is within brackets, \eg (\cmark), this
means that this property is achieved (or can be augmented), but this was initially not discussed or considered by the authors.
$\sim$ denotes that the property cannot be clearly mapped to any of the previously defined categories without further details or discussion.}
\centering
\renewcommand{\arraystretch}{1.2}
\setlength{\tabcolsep}{4pt}
\small
\resizebox{\linewidth}{!}{%
\begin{tabular}{lccccccccccc}
\toprule
\textbf{Attack} &
\multicolumn{1}{p{2.5cm}}{\centering Transaction \\ reversal} &
\multicolumn{1}{p{2.5cm}}{\centering Transaction \\ ordering} &
\multicolumn{1}{p{2.5cm}}{\centering Transaction \\ exclusion} &
\multicolumn{1}{p{2.5cm}}{\centering Transaction \\ triggering} & \multicolumn{1}{p{2.5cm}}{\centering Required interference \\ with consensus} &
\multicolumn{1}{p{2cm}}{\centering Requires smart \\ contract} &
\textbf{Payment} &
\multicolumn{1}{p{2cm}}{\centering Trustless for \\ attacker} &
\multicolumn{1}{p{2.3cm}}{\centering Trustless for \\ collaborator} &
\textbf{Subsidy} &
\multicolumn{1}{p{2.5cm}}{\centering Compensates if \\ attack fails} \\ 
\midrule
$\mathsf{PayToAttest}$ (this work) & \cmark & \xmark & (\cmark) & \xmark & Deep fork & \cmark & in-band & \cmark & \cmark & \cmark & (\cmark) \\ 
$\mathsf{PayToExit}$ (this work) & \xmark & \xmark & \xmark & \xmark & No fork & \cmark & in-band & \cmark & \cmark & \cmark & \cmark \\
$\mathsf{PayToBias}$ (this work) & \xmark & \xmark & \xmark & \xmark & No fork & \cmark & in-band & \cmark & \cmark & \cmark & \cmark \\
Simple Attack~\cite{DBLP:conf/eurosp/SarencheTMSP25} & \cmark  & \cmark  & \cmark & \xmark & Fork & \xmark & in-band  & \cmark  & (\cmark) & \cmark & \xmark \\
Strong Simple Attack~\cite{DBLP:conf/eurosp/SarencheTMSP25}& \cmark  & \cmark  & \cmark & \xmark  & Fork & \xmark  & in-band & \cmark & (\cmark) & \cmark & \xmark \\
Extended Attack~\cite{DBLP:conf/eurosp/SarencheTMSP25}& \cmark  & \cmark  & \cmark & \xmark  & Deep fork  & \xmark & in-band & \cmark & (\cmark) & \cmark & \xmark \\
BriDe Arbitrager~\cite{yang2024bride} & --  & \cmark  & \cmark & \cmark & No fork & \cmark & in-band  & \cmark  & (\cmark) & \cmark & \cmark \\
Guided Bribing~\cite{karakostas2024blockchain} & \cmark & \cmark & \cmark & \xmark & No fork & \xmark & out-of-band & \xmark & \xmark & \cmark & \cmark \\
Effective Bribing~\cite{karakostas2024blockchain} & \cmark & \cmark & \cmark & \xmark & No fork & \xmark & out-of-band & \xmark & (\cmark) & \cmark & \xmark \\
Bribe \& Fork~\cite{DBLP:conf/aft/AvarikiotiKLM24} & \cmark & \cmark & \cmark & \cmark & Deep fork & \cmark & in-band & \cmark & \cmark & \cmark & \cmark \\
Bribery Semi-Selfish Mining~\cite{hu2023novel} & \xmark & \cmark & \cmark & \xmark & Deep fork & \xmark & out-of-band & \xmark & \xmark & \xmark & \xmark \\
Bribery Stubborn Mining~\cite{hu2023novel} & \xmark & \cmark & \cmark & \xmark & Fork & \cmark & out-of-band & \xmark & \xmark & \xmark & \xmark \\
CensorshipCon~\cite{mccorry2018smart} & \xmark & (\cmark) & \cmark & (\cmark) & No forks & \cmark & in-band & $\sim$ & \xmark & \cmark & \xmark \\ 
HistoryRevisionCon~\cite{mccorry2018smart} & \cmark & \xmark & \cmark & (\cmark) & Deep fork & \cmark & in-band & \cmark & $\sim$ & \cmark & \xmark \\ 
GoldfingerCon~\cite{mccorry2018smart} & -- & -- & \cmark all & (\cmark) & No fork & \cmark & out-of-band & \cmark & \xmark & \cmark & \xmark \\ 
P2W Tx Excl. \& Ord~\cite{judmayer2019pay} & \xmark & \cmark & \cmark & (\cmark) & No fork & \cmark & out-of-band & \cmark & \cmark & \xmark & \cmark \\ 
P2W Tx Rev., Excl.  \& Ord.~\cite{judmayer2019pay} & \cmark & \cmark & \cmark & (\cmark) & Deep fork & \cmark & out-of-band & \cmark & \cmark & \xmark & \cmark \\ 
P2W Tx Ord.~\cite{judmayer2019pay}  & \xmark & \cmark & \xmark & (\cmark) & No fork & \cmark & in-band & \cmark & \cmark & \xmark & \xmark \\
P2W Tx Excl.~\cite{judmayer2019pay}  & \xmark & \xmark & \cmark & (\cmark) & No fork & \cmark & in-band & \cmark & \cmark & \xmark & \xmark \\  %https://eprint.iacr.org/2019/775.pdf
\bottomrule
\end{tabular}%
}

\label{tab:attacks_qualitative_comparison}
\end{table}
\end{landscape}
\section{$\mathsf{PayToAttest}$ for General Forking Attacks}\label{sec:exante_paytofork}

Note the original in-band payment idea using anyone-can-spend addresses for forking as proposed in~\cite{bonneau2016buy} is not applicable in the Ethereum case. Ethereum blocks gain weight from the attestors, and attestators are typically not the same entities as the block proposers. Thus, anyone-can-spend addresses or simple transfers do not work.

%The signed block header $H(m)$ could be fed into the bribery contract by the bribee calling the $\mathsf{takeBribe()}$ function \emph{after} a fork happens. Then, the contract must also verify the correctness of the signed message $m$. We refer the details to~\Cref{sec:paytoattest_implementation}.

$\mathsf{PayToAttest}$ bribery contracts can be used to perform reorg attacks for a much larger family of chain strings than was shown in~\Cref{sec:pay_to_fork}.

\begin{figure}
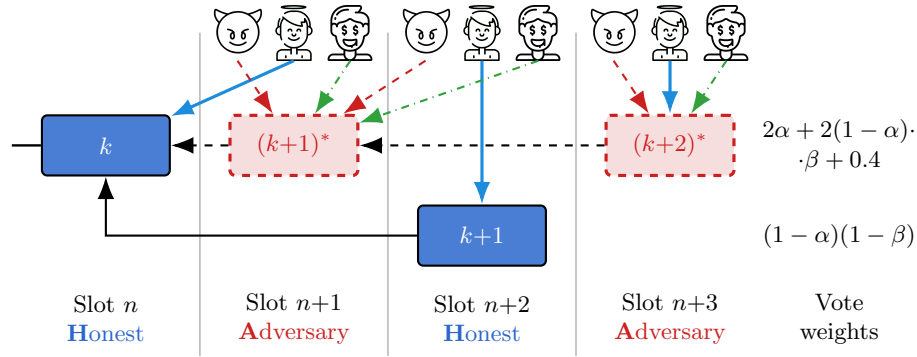

\begin{center}
\begin{tikzpicture}[font=\small]

% Horizontal y level for main chain
\def\y{0}
\def\x{0}

% Slot dividing vertical lines (four slots)
\foreach \i in {0,1,2} {
  \draw[slotline] (2.5*\i,\y-1.8) -- (2.5*\i,\y+2.5);
}

% Attestors
\node[inner sep=0pt] (b1) at (.5,2.5)
    {\includegraphics[width=0.06\textwidth]{Figures/b.png}};
\node[inner sep=0pt] (a1) at (1.25,2.5)
    {\includegraphics[width=0.06\textwidth]{Figures/a.png}};
\node[inner sep=0pt] (r1) at (2.0,2.5)
    {\includegraphics[width=0.06\textwidth]{Figures/r.png}};
    
\node[inner sep=0pt] (b2) at (3,2.5)
    {\includegraphics[width=0.06\textwidth]{Figures/b.png}};
\node[inner sep=0pt] (a2) at (3.75,2.5)
    {\includegraphics[width=0.06\textwidth]{Figures/a.png}};
\node[inner sep=0pt] (r2) at (4.5,2.5)
    {\includegraphics[width=0.06\textwidth]{Figures/r.png}};
    
\node[inner sep=0pt] (b3) at (5.5,2.5)
    {\includegraphics[width=0.06\textwidth]{Figures/b.png}};
\node[inner sep=0pt] (a3) at (6.25,2.5)
    {\includegraphics[width=0.06\textwidth]{Figures/a.png}};
\node[inner sep=0pt] (r3) at (7,2.5)
    {\includegraphics[width=0.06\textwidth]{Figures/r.png}};

%Blocks
\node[honest]    (k)      at (-1.25,1)            {$k$};
\node[adversary] (k1a)    at (1.25,1)       {$(k{+}1)^*$};
\node[honest]    (k1)     at (3.75,-0.2)            {$k{+}1$};
\node[adversary] (k2a)    at (6.25,1)       {$(k{+}2)^*$};

% Attestation arrows
\draw[honestarw] (a1.south) to (k);
\draw[advarw] (b1.south) to (k1a);
\draw[altarw] (r1.south) to (k1a);

\draw[honestarw] (a2.south) to (k1);
\draw[advarw] (b2.south) to (k1a);
\draw[altarw] (r2.south) to (k1a);

\draw[honestarw] (a3.south) to (k2a);
\draw[advarw] (b3.south) to (k2a);
\draw[altarw] (r3.south) to (k2a);

%Chain arrows
\draw[black, line width=0.9pt] (-2.5,1) -- (k);
\draw[achain] (k1a) -- (k);
\draw[achain] (k2a) -- (k1a);
\draw[chain] (k1) -- (-1.25,-.2) -- (k);

% Slot labels under diagram
\node[align=center] at (-1.25,\y-1.3)  {Slot $n$ \\ \textcolor{honestblue}{\textbf{H}onest}};
\node[align=center] at (1.25,\y-1.3)  {Slot $n{+}1$ \\ \textcolor{adversaryred}{\textbf{A}dversary}};
\node[align=center] at (3.75,\y-1.3)  {Slot $n{+}2$ \\ \textcolor{honestblue}{\textbf{H}onest}};
\node[align=center] at (6.25,\y-1.3)  {Slot $n{+}3$ \\ \textcolor{adversaryred}{\textbf{A}dversary}};

\node[align=center] at (8.5,1) {$2\alpha+2(1-\alpha)\cdot$\\$\cdot\beta+0.4$};
\node[align=center] at (8.5,\y-0.2) {$(1-\alpha)(1-\beta)$};
\node[align=center] at (8.5,\y-1.3) {Vote\\ weights}; 
\end{tikzpicture}
\end{center}
\caption{Performing an ex-ante reorg with a $\mathsf{PayToAttest}$ bribery contract. Colored arrows indicate which blocks different validators vote for as the head of the blockchain. Black arrows represent hash pointers. Red (blue) blocks are proposed by the briber (honest validators). }
\label{fig:ex-ante-fork}
\end{figure}

\begin{figure}[ht]
    \centering
   \resizebox{0.99\textwidth}{!}{\input{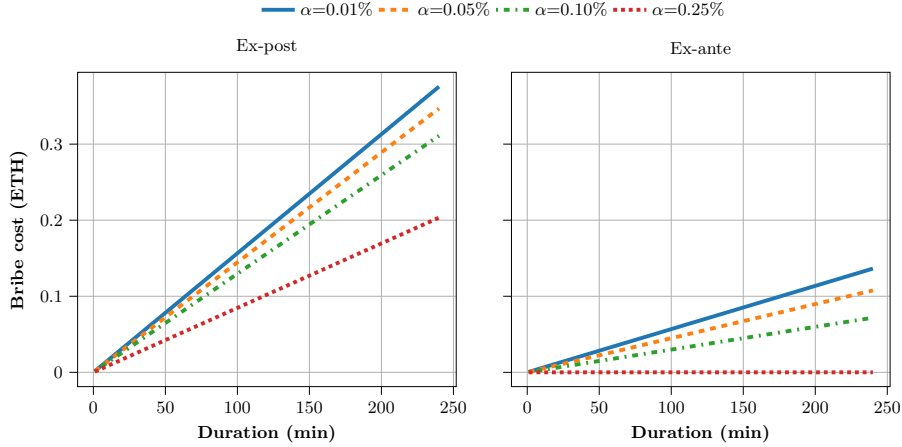}}
    \caption{Ex-post and ex-ante reorg bribe costs for continuous short-term reorg attacks. In these measurements, we assume that the briber leverages the $\mathsf{PayToAttest}$ bribery contract to bribe and reorg honest blocks at every possible occasion. Moreover, we assume that the minimal amount of rational validators accept the bribe necessary for the reorg to be successful,~\cf~\Cref{eq:expost_winning_condition}.}
    \label{fig:reorg_bribe_costs}
\end{figure}

\subsection{General ex-post reorgs}\label{sec:general_expost_reorgs}

In~\Cref{sec:pay_to_fork}, we saw that the adversary can execute an ex-post reorg leveraging a $\mathsf{PayToAttest}$ contract to buy rational validators' attestations. Moreover, a motivated adversary can execute ex-post reorgs for many more chain strings as we show next.

\begin{theorem}
    Given the following chain string $\honestblock{}{h}\advblock{}{a}$, the adversary can fork out $\honestblock{}{h}$ with the joint voting power of the adversary and bribees if~\cref{eqn:ex-post-suff-votes} holds.
    More formally, the adversary's fork wins iff.
    \begin{equation}\label{eqn:ex-post-suff-votes}
        (a - 1)\cdot(\alpha + (1 - \alpha)\beta) + \pboost{} > (h + a - 1)\cdot(1 - (\alpha + (1 - \alpha)\beta))\enspace .
\end{equation}
\end{theorem}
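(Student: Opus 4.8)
The plan is to prove the claim by writing down the adversary's bribing schedule explicitly, tallying the LMD GHOST subtree weights that each competing branch has accumulated at the decisive moment, and then reading off the fork-choice comparison. Throughout, write $x := \alpha + (1-\alpha)\beta$ for the combined voting fraction the adversary effectively commands in any committee (its own stake plus the bribed rational validators), so that $1 - x = (1-\alpha)(1-\beta)$ is the altruistic fraction. The geometry is the one in \Cref{fig:ex_post_fork_explainer}, generalized: let $C$ be the last common block; the honest proposers produce the run $\honestblock{1}{},\dots,\honestblock{h}{}$ in the first $h$ slots (these are the $\honestblock{}{h}$ to be forked out), and the adversary holds the next $a$ proposer slots, in which it grows a competing branch $\advblock{1}{},\dots,\advblock{a}{}$ rooted at the same $C$. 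The adversary's aim is that, at the instant it publishes its final block $\advblock{a}{}$ (which then carries the transient proposer boost $\pboost$), the subtree rooted at $\advblock{1}{}$ outweighs the subtree rooted at $\honestblock{1}{}$, so that LMD GHOST makes the adversarial child of $C$ canonical.

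The core of the argument is a vote count under the following strategy. Altruistic validators always attest to the honest head, so across the $h+a-1$ committees of slots $1,\dots,h+a-1$ they contribute weight $(h+a-1)(1-x)$ to the honest subtree; the committee of the final slot $h+a$ has not yet attested when the proposer-boosted fork choice is evaluated, so it is excluded. The adversary, meanwhile, pays the rational validators (and spends its own stake) to withhold support from the honest branch during the $h$ honest slots -- voting for $C$ rather than for $\honestblock{j}{}$ -- and to attest to the adversarial branch during the $a-1$ adversarial slots $h+1,\dots,h+a-1$; this yields adversary-subtree weight $(a-1)x$, to which the boost on $\advblock{a}{}$ adds $\pboost$. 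Hence at the decisive slot the two subtrees weigh exactly
\begin{equation*}
    w_{\mathrm{adv}} = (a-1)x + \pboost \qquad\text{and}\qquad w_{\mathrm{hon}} = (h+a-1)(1-x),
\end{equation*}
and LMD GHOST makes the adversarial branch canonical precisely when $w_{\mathrm{adv}} > w_{\mathrm{hon}}$. Substituting $x = \alpha+(1-\alpha)\beta$ and $1-x=(1-\alpha)(1-\beta)$ turns this into \Cref{eqn:ex-post-suff-votes}. As a sanity check, setting $h=1,\,a=2$ recovers the single-block condition of \Cref{eq:expost_winning_condition}.

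What remains -- and what I expect to be the delicate step -- is to justify that these two weights are attained \emph{exactly}, which is what upgrades the sufficiency direction into the stated ``iff''. Two consistency points must be addressed. First, the altruistic validators must genuinely keep voting for the honest branch through slot $h+a-1$, i.e.\ the honest subtree must remain the heavier one at every intermediate fork-choice evaluation; I would verify this by an induction over slots $h+1,\dots,h+a-1$, comparing the partial sums $(j-1)(1-x)$ against $(j-1-h)x+\pboost$ and showing the honest branch only surrenders its lead at the very last slot. Second, for necessity one argues that at the first moment the adversarial branch overtakes, only the $x$-fraction can be supporting it, so its weight is capped by $(a-1)x+\pboost$ while the honest weight is at least $w_{\mathrm{hon}}$, forcing the inequality. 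The main obstacle is precisely this bookkeeping, because the proposer boost is \emph{transient}: it decorates only the current slot's block, so when $x$ is large the adversarial branch may momentarily lead at an intermediate slot and pull honest attestations onto itself, perturbing the naive count. I would handle this either by having the adversary withhold the branch until slot $h+a$ (removing intermediate boosts from the honest validators' view) or by arguing that any such early switch only increases $w_{\mathrm{adv}}$ and thus preserves sufficiency; the boundary case $w_{\mathrm{adv}}=w_{\mathrm{hon}}$ is a tie resolved against the adversary, which is exactly why \Cref{eqn:ex-post-suff-votes} is strict. The vote arithmetic itself is routine; the care lies entirely in this consistency/tightness argument.
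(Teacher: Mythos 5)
Your vote tally and strategy are exactly the paper's own proof: the adversary and bribees vote for the common ancestor during the $h$ honest slots, then contribute $(a-1)\cdot(\alpha+(1-\alpha)\beta)$ to their branch over the next $a-1$ slots plus the proposer boost on the final block, against the altruistic weight $(h+a-1)\cdot(1-\alpha)(1-\beta)$ accumulated on the honest branch, and the comparison yields \Cref{eqn:ex-post-suff-votes}. The consistency and necessity issues you flag as remaining work (intermediate fork-choice states, the transient nature of the boost, the ``iff'' direction) are genuine, but the paper's proof does not address them either---it simply labels the honest attestors' continued support of $\honestblock{}{h}$ as the ``worst-case scenario'' and stops at the same weight comparison.
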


\begin{proof}
    First, the honest branch $\honestblock{}{h}$ is proposed, receiving a total attestation of $h\cdot(1 - (\alpha + (1 - \alpha)\beta))$. During this period, the adversary and the bribees vote for block $k$ (\cf~\cref{fig:ex_post_fork_explainer}). In the next $a-1$ slots they start the adversarial branch on top of block $k$, continuously voting for it: $(a - 1)\cdot(\alpha + (1 - \alpha)\beta)$. ``Worst''-case scenario, the honest attestators still consider $\honestblock{}{h}$ as the canonical chain, thus getting votes of $(a - 1)\cdot(1 - (\alpha + (1 - \alpha)\beta))$. Finally, the adversary builds and proposes the last block of the adversarial branch promptly, giving it the proposer boost $\pboost{}$. To calculate when the adversarial branch is considered the canonical chain by the LMD-Ghost rule, we get~\Cref{eqn:ex-post-suff-votes}.
\end{proof}

\subsection{General ex-ante reorgs}\label{sec:general_exante_reorgs}

Ex-ante reorgs against the Ethereum PoS protocol were introduced in~\cite{schwarz2022three}. Utilizing our $\mathsf{PayToAttest}$ bribery contracts, an adversary can launch ex-ante reorgs for various chain strings as we show below.

\begin{theorem}[Condition for ex-ante reorgs]\label{thm:ex-ante-reorgs}
    Given the following chain string $\advblock{}{a}\honestblock{}{h}\advblock{}{}$, the adversary is capable of forking out $\honestblock{}{h}$ with the joint voting power of the adversary and the collaborating bribees if the following holds.
    \begin{equation}\label{eqn:ex-ante-suff-votes}
        (a+h)\cdot(\alpha + (1 - \alpha)\beta) + \pboost{} > h\cdot(1-(\alpha + (1 - \alpha)\beta))\enspace .
    \end{equation}
\end{theorem}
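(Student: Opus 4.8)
The plan is to mirror the slot-by-slot attestation accounting used in the ex-post proof behind \Cref{eqn:ex-post-suff-votes}, but to exploit the one crucial difference introduced by \emph{withholding} the first $a$ adversarial blocks: the honest attestations cast during those slots are wasted. Writing $p := \alpha + (1-\alpha)\beta$ for the combined voting power of the adversary together with the bribees, I would track how attestation weight accumulates on the two subtrees rooted at the common ancestor of the chain string $\advblock{}{a}\honestblock{}{h}\advblock{}{}$, and then compare them under LMD-GHOST.

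First I would analyze the $a$ adversarial slots. Here the adversary proposes each block but withholds it (\cf the sequence diagram in \Cref{fig:pay_to_attest_ex_ante_timeline}); the bribees privately sign attestations for these blocks and likewise withhold them until the reveal. Consequently the adversary-and-bribee committees contribute $a\cdot p$ to the adversarial subtree, whereas every honest attestor, seeing no new block, votes for the common ancestor. Because the honest branch $\honestblock{}{h}$ does not yet exist, and because each validator attests only once per epoch with a distinct committee per slot (so latest-message votes do not migrate), these $a\cdot(1-p)$ honest votes land on the fork point itself and never count toward the honest subtree. This ``wasted honest weight'' is the single observation that separates the ex-ante bound from the ex-post one, and it is the step I expect to need the most care to justify rigorously, since it rests on the interaction between block withholding and the latest-message fork-choice rule.

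Next I would handle the $h$ honest slots. Now honest proposers publish blocks building on the common ancestor, so the honest committees contribute $h\cdot(1-p)$ to the honest subtree, while the adversary and bribees keep attesting to (and withholding) the adversarial chain, adding a further $h\cdot p$ to it. Finally, in the terminal slot the adversary reveals all withheld blocks and attestations and promptly proposes the last block, earning the proposer boost $\pboost{}$. Summing yields an adversarial subtree weight of $(a+h)\cdot p + \pboost{}$ and an honest subtree weight of $h\cdot(1-p)$.

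To conclude, I would invoke the LMD-GHOST rule that at the fork point the heavier branch is selected: the adversarial branch becomes canonical precisely when $(a+h)\cdot p + \pboost{} > h\cdot(1-p)$, which upon substituting $p = \alpha + (1-\alpha)\beta$ is exactly \Cref{eqn:ex-ante-suff-votes}. As a sanity check I would verify the special case $a=h=1$ against \Cref{fig:ex-ante-fork}, where the adversarial and honest weights collapse to $2p+\pboost{}$ and $(1-\alpha)(1-\beta)=1-p$, respectively, matching the annotated vote weights in that figure.
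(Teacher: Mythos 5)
Your proof is correct and follows essentially the same slot-by-slot attestation accounting as the paper's own argument: the adversarial subtree accumulates $(a+h)\cdot(\alpha+(1-\alpha)\beta)+\pboost{}$ while honest votes during the withholding phase land on the fork point (block $k$) and count toward neither branch, leaving the honest subtree with only $h\cdot(1-(\alpha+(1-\alpha)\beta))$. Your explicit justification of the ``wasted honest weight'' step is merely a more careful spelling-out of what the paper states tersely, so no substantive difference exists.
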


\begin{proof}
    During the first $a$ slots, the adversary secretly builds $\advblock{}{a}$ blocks, which receive $(a + h)\cdot(\alpha + (1 - \alpha)\beta)$ votes by the end of the honest branch. On the other hand, the honest validators consider the private blocks $\advblock{}{a}$ missed, voting for block $k$ instead,~\cf~\Cref{fig:ex-ante-fork}.
    The first honest block is built on top of block $k$, and the honest branch gets a total of $h\cdot(1-(\alpha + (1 - \alpha)\beta))$ attestations. Finally, the secret branch is published when the adversary builds on top of $\advblock{}{a}$, receiving the proposer boost. The branch is heavier whenever~\Cref{eqn:ex-ante-suff-votes} holds.
\end{proof}
\section{Additional proofs and calculations}\label{sec:additional_simple_calculations}

\subsection{Present value (PV) half-life for a fixed discount factor}
\label{sec:halflife}

Let \(R\) denote the expected nominal annual staking reward.  Consider a staker who would otherwise receive \(R\) each year in perpetuity.  The present value ($PV$) of the infinite stream (a perpetuity) discounted at rate \(r>0\) is
\begin{equation}
    PV_{\infty} \;=\; \frac{R}{r}\enspace,
\end{equation}
as standard in financial mathematics~\cite{Brealey2020}.

If the staker instead receives rewards only up to a finite horizon \(T\), the PV of the finite annuity is
\begin{equation}
    PV(T) \;=\; R\sum_{t=1}^{T} \frac{1}{(1+r)^t}
           \;=\; R\frac{1-(1+r)^{-T}}{r}\enspace.
\end{equation}

The \emph{$PV$ half-life} is defined as the horizon \(T_{1/2}\) at which the $PV$ of cumulative rewards equals one half of the perpetuity value. This yields
\begin{equation}
    T_{1/2} \;=\; \frac{\ln 2}{\ln(1+r)}.
\end{equation}

For \(r=0.08\), we obtain \(T_{1/2} \approx 9\) years, implying that half of the $PV$ of all future staking rewards is realized within the first decade of operation.  This provides a transparent benchmark for comparing one-time bribes with the net present value of forgone staking income.

\subsection{Visualizing the inefficacy of the constant-bribe upper bound}\label{sec:apr-rect}
Recall that in~\Cref{sec:pay_to_exit_incentives}, we computed an optimal \emph{constant} bribe $b^{*}$, \ie in total, a briber pays $k^*\cdot b^{*}$ bribe amount to $k^*$ exiting validators. However, Ethereum protocol rewards are set by the consensus protocol as a decreasing function in the number of validators,~\cf~\Cref{eq:util_L} and~\Cref{fig:apr-rect}. This means that the briber must pay increasing bribe amounts to the exiting rational validators since the annual return increases as the number of remaining validators decreases. As~\Cref{fig:apr-rect} illustrates, our calculated optimal bribe is the bribe paid to the last exiting validator from the total exiting $k^*$ validators. Thus, in a more realistic setting where the $\mathsf{PayToBias}$ contract pays dynamic bribe amounts, the adversary only needs to pay the area under the curve as opposed to the rectangle with $b^{*}$ corresponding to a constant bribe amount $b_{max}$,~\cf~\Cref{fig:apr-rect}. We leave it to future work to add the dynamic bribe amount feature to our $\mathsf{PayToExit}$ bribery contract implementation, depending on the number of remaining validators.  
\begin{figure}[H]
  \centering
  \input{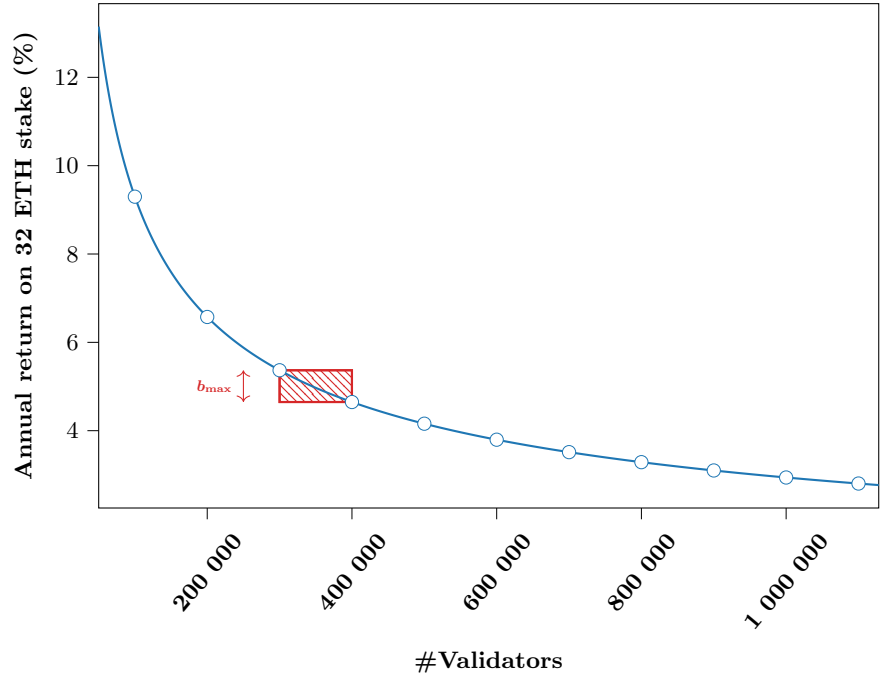}
  \caption{\textbf{Upper-bound rectangle on APR vs.\ $\#$Validators.}
  On the interval $[N_0-k,N_0]$ we mark a constant-bribe upper bound $b_{\max}=R(N-k) \cdot \mathrm{PV}(r,Y)$.
  The rectangle’s area $k\cdot b_{\max}$ is a conservative bribe cost upper bound, \emph{not} an integral under the curve which could be achieved by dynamically changing $\mathsf{PayToExit}$ bribe amounts depending on the number of remaining active validators (left for future work).}
  \label{fig:apr-rect}
\end{figure}

%\subsection{Numerical example -- visual aid}
%\label{sec:num_example_visual_aid}

%\begin{figure}[H]
%  \centering
%  \input{Figures/Numerical_example_5.2}
%  \caption{Enlarged view of the $g(k)$\kamilla{should be $R(N-k) \cdot \mathrm{PV}(r,Y)$} function near the optimum $k^{*}$. The dashed line marks $k^{*}$ exiting validators in the Nash-equilibrium (NE) for the parameters $(N=\num{1127967}, r=8\%,Y=9, \alpha=23\%, \alpha^*=33\%)$; $\color{red}{\times}$ denotes the optimum bribe $b^{*}$ and $\color{blue}{\times}$ denotes $g(k^{*})$\kamilla{should be $R(N-k) \cdot \mathrm{PV}(r,Y)$}.}

%  \label{fig:gkstar}
%\end{figure}

\end{document}